\def\thm@space@setup{%
  \thm@preskip=\parskip \thm@postskip=0pt
}
\newtheorem{theorem}{Theorem}
\newtheorem{corollary}[theorem]{Corollary}
\newtheorem{lemma}[theorem]{Lemma}
\theoremstyle{definition}
\DeclareMathOperator*{\argmin}{arg\,min}
\newcommand{\bigO}{\mathcal{O}}
\newcommand{\EX}{\mathbb{E}}
\newcommand{\ind}{\ensuremath{\mathds{1}}}
\newcommand{\cp}{C^{\text{LP}}}
\newcommand{\xp}{x^{\text{LP}}}
\newcommand{\alg}{{\normalfont\textsc{Alg}}\xspace}
\newcommand{\opt}{{\normalfont\textsc{Opt}}\xspace}
\newcommand{\ms}{\mathcal{S}}
\newcommand{\mc}{\mathcal{C}}
\newcommand{\abc}[3]{$#1\,|\, #2\,| #3$} %
\newcommand{\pmtn}{\mathrm{pmtn}}
\title{Polytope Scheduling with Groups:\\ Unified Models and Optimal Guarantees}
 \author{Alexander Lindermayr\thanks{Technische Universität Berlin, Germany, \texttt{alexander.lindermayr@tu-berlin.de}. This work was done when the author was affiliated with the University of Bremen and supported by the ``Humans on Mars Initiative'', funded by the Federal State of Bremen and the University of Bremen.} \and
 Zhenwei Liu\thanks{Zhejiang University, Hangzhou, China, \texttt{lzw98@zju.edu.cn}. This work was done when the author was also affiliated with the University of Bremen.} \and
 Nicole Megow\thanks{University of Bremen, Germany, \texttt{nicole.megow@uni-bremen.de}.}}
 \date{}
\begin{document}

\maketitle              %
\begin{abstract}
We propose new abstract and unified perspectives on a range
of scheduling and graph coloring problems with general min-sum 
objectives. Specifically, we consider various problems where 
the objective function is the weighted sum of completion times 
over groups of entities (jobs, vertices, or edges), thereby 
generalizing two important objectives in scheduling: makespan 
and the sum of weighted completion times.

As one of our main results, we present a best-possible 
$\mathcal O(\log g)$-competitive algorithm in the non-clairvoyant 
online setting, where $g$ denotes the size of the largest group.
This is the first non-trivial competitive bound for several problems 
with group completion time objective, and it is an exponential improvement 
over previous results for non-clairvoyant coflow scheduling. For 
offline scheduling, we provide elegant yet powerful meta-frameworks 
that, in a unifying way, yield new or stronger approximation algorithms 
for our new abstract problems as well as for previously well-studied 
special cases.
\end{abstract}

\section{Introduction}

We introduce new unifying abstractions %
of machine scheduling, graph scheduling, and graph coloring problems with general min-sum objectives, and present algorithms for various information settings, both offline models and online models.
More specifically, we focus on problems from these domains where the goal is to minimize the (weighted) sum of certain values $C_j$ for certain entities $j \in J$ (jobs, vertices, or edges), such as the sum of completion times in scheduling, but also its generalization to groups: given a family~$\ms$ of subsets of entities and weights $w_S > 0$ for each $S \in \ms$, we seek to minimize the sum of $\sum_{S \in \ms} w_SC_S$, where $C_S = \max_{j\in S} C_j$. Notably,
the latter objective generalizes both makespan---where %
all jobs belong to a single group---and the sum of weighted completion times---where each job is a group itself.

We establish new connections between well-studied problems and offer a unified perspective that leads to a clearer and more systematic understanding of the problems and their algorithmic solutions.
This perspective allows us to transfer techniques across seemingly very different problems with group completion time objectives. We establish new optimal bounds for non-clairvoyant online scheduling with group completion times, and improve previous approximation ratios for machine scheduling~\cite{DBLP:journals/mor/CorreaSV12} and graph scheduling/coloring problems~\cite{DBLP:conf/swat/DarbouyF24,DBLP:journals/talg/GandhiHKS08}.

Our results cover the following well-studied problem domains; we give more details later. 
\begin{itemize}
  \item \textbf{Machine Scheduling:} In machine scheduling problems, jobs need to be scheduled on parallel machines, where each
  job may have a different processing time on each machine. There are variants with and without preemption and/or migration. Well-studied objective functions are the sum of completion times~\cite{DBLP:journals/mor/HallSSW97,DBLP:journals/jacm/Skutella01,ImKMP14,BansalSS16}, and the sum of group completion times (aka {\em scheduling orders} or {\em bag-of-tasks scheduling})~\cite{leung2006approximation,DBLP:journals/scheduling/YangP05,DBLP:conf/approx/00010021,DBLP:journals/mor/CorreaSV12,DBLP:conf/approx/00010021}.
  \item\textbf{Polytope Scheduling:}
  The polytope scheduling problem generalizes preemptive unrelated machine scheduling~\cite{DBLP:journals/mor/HallSSW97}
  and other preemptive scheduling problems, such as multidimensional scheduling~\cite{DBLP:conf/nsdi/GhodsiZHKSS10} and broadcast scheduling~\cite{DBLP:journals/siamcomp/BansalCS08}. Here, the goal is to select at any time a vector in a packing polytope, which describes the amounts of progress that jobs make;
  see details later. While the sum of weighted completion times objective is well understood~\cite{DBLP:journals/jacm/ImKM18,JLM25,CIP25}, we are not aware of results for group completion times. %
  \item\textbf{Sum Multicoloring:}
   Given a graph with vertex weights $w_v$ and demands~$p_v$, the goal is to assign each vertex $v$ a set of $p_v$ colors (represented by $\mathbb N$), ensuring that no two adjacent vertices share a color. The goal is to minimize the weighted sum of the largest color assigned to a vertex over all vertices~\cite{DBLP:journals/jal/Bar-NoyK98,DBLP:journals/talg/GandhiHKS08,DBLP:journals/jal/HalldorssonK02}. The notion of ``no preemption'' means here that the colors assigned to a vertex must be consecutive.
   The group completion time objective (largest color per group) was studied %
   for unit demands~\cite{DBLP:conf/swat/DarbouyF24}. 
   \item \textbf{Graph Scheduling:}
   Given a graph, the task is to schedule edges with processing times such that %
   the set of scheduled edges at any time forms a matching.
   Common objective functions include
   the sum of edge completion times~\cite{DBLP:journals/talg/GandhiHKS08,DBLP:journals/talg/HalldorssonKS11},
   the sum of grouped edge completion times (called {\em coflow scheduling}~\cite{DBLP:journals/algorithmica/AhmadiKPY20,DBLP:conf/icalp/ImMPP19,DBLP:conf/approx/Fukunaga22,DBLP:conf/wiopt/BhimarajuNV20,RohwedderS25}),
   and the sum of vertex completion times (called {\em data migration}~\cite{DBLP:journals/jal/Kim05,DBLP:journals/talg/GandhiHKS06,DBLP:journals/siamcomp/Mestre10}), where a vertex completes when all its incident edges are completed, %
   a special case of group completion times.
   Graph scheduling on bipartite graphs %
   generalizes the fundamental {\em open shop scheduling} problem~\cite{DBLP:journals/jal/QueyranneS02,QueyranneS02non-preemptive,DBLP:journals/talg/GandhiHKS13}.
\end{itemize}

Our abstract problem formulations are the following.
In the \emph{polytope scheduling problem with group completion times} (PSP-G),
we are given $n$ jobs $j \in J$ with processing requirements $p_j\geq 0$, and a family of groups $\ms \subseteq 2^J$. Each group $S\in \ms$ has associated a positive weight~$w_S>0$.
At any time~$t$, a solution must choose a \emph{rate vector} $y(t)$ in a given downward-closed polytope $\mathcal{P}=\{y\in \mathbb{R}_{\geq 0}^n \mid B\cdot y \leq \mathbf{1}\}$ for some fixed $B = [b_{d,j}]^{D\times n} \in \mathbb{Q}_{\geq 0}^{D\times n}$.
For every job $j$, the entry $y_j(t)$ describes how much processing job $j$ receives at time $t$. Thus, its completion time is $C_j \coloneq \argmin_{t'} \big( \int_0^{t'} y_j(t) \, \mathrm{d}t \geq p_j \big)$.
The objective is to minimize the sum of weighted group completion times $\sum_{S \in \ms} w_S C_S$.
The polytope scheduling problem for the sum of weighted completion times objective (without groups) was first studied by Im, Kulkarni, and Munagala~\cite{DBLP:journals/jacm/ImKM18}, and elegantly abstracts problem-specific details from various preemptive scheduling environments. %
PSP-G captures all the above-mentioned \emph{preemptive} scheduling problems, for which we give formal arguments later. Note that this model does not capture time-dependent constraints like non-preemptiveness or consecutive colorings.

To %
address \emph{non-preemptive} problems,
we introduce a novel discrete variant of PSP-G, which we call \emph{discrete polytope scheduling with group completion times}  (DPSP-G).
On a high-level, it is the discrete analogue to PSP:
We are given (possibly implicitly) a subset of points $\mathcal{P}' \subseteq \mathcal{P}$ of a given downward-closed polytope $\mathcal{P}=\{y\in \mathbb{R}_{\geq 0}^n \mid B\cdot y \leq \mathbf{1}\}$ for some fixed $B = [b_{d,j}]^{D\times n} \in \mathbb{Q}_{\geq 0}^{D\times n}$.
At any time $t$, a schedule must choose a rate vector $y(t) \in \mathcal{P}'$, describing the processing that jobs receive. Additionally, for each job $j$, there is a time $S_j$ such that $y_{j}(t) = y_{j}$ for all $S_j \leq t < C_j$ and $y_{j}(t) = 0$ for all $t < S_j$ and $t \geq C_j$.
The objective is to minimize %
$\sum w_S C_S$.
DPSP-G captures the above-mentioned \emph{non-preemptive} scheduling and coloring problems. %

We study these problems in %
(i) the \emph{non-clairvoyant online} model, where the processing requirements $p_j$ are uncertain, and the algorithmic challenge is of information-theoretic nature,
and (ii) the offline model, where all parameters of an instance are given, and the challenge is the computational complexity. %

\subsection{Our Results}

\begin{table}[tb]
  \caption{An overview of our new and improved results in comparison to previous work.}\label{table:results}
  \setlength{\tabcolsep}{8pt}
  \centering
  \begin{tabular}{lrr}
    \toprule
    problem & known bound & new bound \\
    \midrule
    non-clairvoyant PSP-G & - & $\bigO(\log (\max_{S\in \ms}|S|))$ \\
    non-clairvoyant coflow scheduling & $\bigO(\max_{S\in \ms}|S|)$~\cite{DBLP:conf/wiopt/BhimarajuNV20} & $\bigO(\log (\max_{S\in \ms}|S|))$ \\
    \midrule
    \abc{Q}{r_{j}}{\sum w_SC_S} & $13.5$~\cite{DBLP:journals/mor/CorreaSV12} & $10.874$ \\
    npSMC-G on interval graphs & $11.273$~\cite{DBLP:journals/talg/GandhiHKS08} & $10.874$ \\
    SC-G on perfect graphs & $10.874$~\cite{DBLP:conf/swat/DarbouyF24} & $5.437$ \\
    \midrule
    offline PSP-G & - & $2+\varepsilon$ \\
    \abc{R}{r_{j},\pmtn}{\sum w_SC_S} & $4+\varepsilon$~\cite{DBLP:journals/mor/CorreaSV12} & $2+\varepsilon$ \\
    \bottomrule
  \end{tabular}
\end{table}

We develop {\em general algorithmic frameworks} by framing diverse problems in machine scheduling, graph scheduling, and graph coloring as (discrete) polytope scheduling with groups (PSP-G and DPSP-G).
We give both non-clairvoyant online algorithms and offline approximation algorithms; these are summarized in~\Cref{table:results} with a comparison to prior work.
Abstracting across previously unrelated settings and designing unified algorithms is technically challenging and
constitutes a non-trivial contribution of our work.
In doing so, we even improve substantially upon previous results for certain scheduling and coloring problems.

\medskip
\noindent \textbf{Non-Clairvoyant Online Scheduling.}
In non-clairvoyant scheduling, an algorithm has no knowledge about the processing requirements $p_j$ of jobs until they complete~\cite{DBLP:journals/tcs/MotwaniPT94,DBLP:journals/siamcomp/ShmoysWW95}. Moreover, it must create a schedule over time, and cannot revert decisions from previous times. We say that an online algorithm is $\alpha$-competitive if it computes a solution with an objective value of at most $\alpha \cdot \opt$, where $\opt$ denotes the optimal objective value for the instance. %
\begin{restatable}{theorem}{thmNonClairvoyant}\label{thm:nonClairvoyant}
  There is an $\bigO(\log (\max_{S\in \ms}|S|))$-competitive non-clairvoyant algorithm for PSP with group completion times, PSP-G.
\end{restatable}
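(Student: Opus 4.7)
The plan is to give a non-clairvoyant algorithm based on Proportional Fairness (PF) with time-varying, group-aware weights, and to analyze it via a per-group dyadic decomposition in the style of Im-Kulkarni-Munagala.

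At every time $t$, let $a_S(t)$ denote the number of uncompleted jobs in group $S$. The algorithm assigns each active job $j$ the weight
\[
  \hat w_j(t) \;=\; \sum_{S \ni j,\, a_S(t) \ge 1} \frac{w_S}{a_S(t)},
\]
and selects the rate vector $y(t) \in \mathcal P$, supported on active jobs, that maximizes $\sum_j \hat w_j(t) \log y_j(t)$. This is manifestly non-clairvoyant, since $\hat w_j(t)$ depends only on which jobs are still running. First I would record the cost identity
\[
  \sum_{S \in \ms} w_S\, C_S^{\alg} \;=\; \int_0^\infty \sum_{S : a_S(t) \ge 1} w_S \, dt \;=\; \int_0^\infty \sum_j \hat w_j(t)\, dt,
\]
which follows by swapping the order of summation: at each time, every active group redistributes its weight $w_S$ uniformly among its $a_S(t)$ active jobs, and the total redistributed weight is exactly the instantaneous accumulation rate of the group completion time objective. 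Thus the algorithm's PF step is locally optimizing a proxy whose integral equals its own cost.

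For the competitive analysis, I would adapt the potential-function / KKT-stationarity framework of Im-Kulkarni-Munagala for non-clairvoyant PSP with sum of weighted completion times to the time-varying weight setting. The key structural idea is a per-group dyadic decomposition: split the life of group $S$ into at most $1 + \lceil \log_2 |S| \rceil$ levels, where level $k$ is the (possibly disconnected) set of times at which $a_S(t) \in [2^k, 2^{k+1})$. On level $k$, each active job of $S$ carries effective weight $\Theta(w_S / 2^k)$ from this group, and the level's contribution to $\sum w_S C_S$ is $w_S \cdot T_S^{(k)}$, where $T_S^{(k)}$ is the level's total duration under $\alg$. Applying an IKM-style argument one level at a time, I expect to charge $T_S^{(k)}$ to $C_S^{\opt}$ up to a constant factor uniformly across the at most $\bigO(\log g)$ levels; summing over levels gives a factor of $\bigO(\log g)$ per group, and summing over groups then yields the stated bound.

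The main obstacle, I expect, will be twofold. First, $\hat w_j(t)$ depends on $\alg$'s own state and drops discontinuously whenever a job completes and pushes its groups into the next level, so the standard stationarity/potential argument must be rederived carefully around these boundary transitions. Second, within a single level I cannot charge job-by-job in $S$, since that would reintroduce a factor of $|S|$; the argument must instead compare $\alg$'s aggregate progress on the currently active part of $S$ to $\opt$'s, leveraging the fact that $\opt$ too must eventually finish every job of $S$ within time $C_S^{\opt}$. The resulting $\bigO(\log g)$ blow-up should be inherent to this level-by-level aggregation and should match the asymptotic lower bound asserted in the abstract.
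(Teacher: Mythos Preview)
Your algorithm and the cost identity are exactly what the paper uses. The gap is in the structure of the analysis.

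The paper does \emph{not} apply an IKM-style argument ``one level at a time.'' It runs a single global dual-fitting analysis against a group-aware LP relaxation (the mean-busy-time LP augmented with precedence-type constraints $\sum_{t'\le t} x_{S,t'} \le \sum_{t'\le t} x_{j,t'}/p_j$ for every $j\in S$). The KKT conditions of PF with your weights $\hat w_j(t)$ are used to define dual variables $(\alpha_S,\beta_{d,t},\gamma_{j,S,t})$; feasibility and the $\alpha$-bound go through essentially as in IKM. The entire $\log g$ loss appears in exactly one place: bounding $\sum_{d,t}\beta_{d,t}$, which reduces to showing, for each group $S$,
\[
  \sum_{t\ge 0}\ \sum_{j\in S(t)} \frac{1}{|S(t)|}\cdot\frac{y_j(t)}{p_j}\ \le\ H_{|S|}\,.
\]
The paper proves this via a small auxiliary LP. Your dyadic idea would also prove it (with $O(\log|S|)$ in place of $H_{|S|}$): on the level where $|S(t)|\in[2^k,2^{k+1})$, the factor $1/|S(t)|$ is at most $2^{-k}$, while the total fractional progress of $S$'s jobs during that level is at most $2^{k+1}$, so each level contributes $O(1)$. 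But this is a \emph{local} replacement inside a global dual-fitting proof, not a standalone per-level competitive argument.

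What will not work is your proposal to ``charge $T_S^{(k)}$ to $C_S^{\opt}$ up to a constant factor'' group by group and level by level. PF's rates, and hence the durations $T_S^{(k)}$, are determined by \emph{all} active jobs simultaneously through the KKT conditions; there is no mechanism that isolates one group's level from the rest of the instance. A heavy collection of other groups can make $T_S^{(k)}$ arbitrarily large relative to $C_S^{\opt}$ for a single light group $S$; the compensation comes only through the global dual objective, not through any per-group inequality. So the missing idea is the group-aware LP/dual (in particular the $\gamma_{j,S,t}$ variables linking groups to jobs) and the recognition that the dyadic decomposition is a tool for one inequality inside that global argument, not a replacement for it.
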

This bound is best-possible; we show a lower bound of $\Omega(\log \max_{S \in \ms} |S|)$ for any maximum group size. %
Our result is the first non-trivial bound on the competitive ratio for non-clairvoyant PSP-G and for many of its subproblems, including unrelated machine scheduling with group completion times and data migration. For coflow scheduling, it improves exponentially upon the previous linear competitive ratio in the largest coflow~(i.e., group) size~\cite{DBLP:conf/wiopt/BhimarajuNV20}. %
\begin{corollary}
  There exists a single non-clairvoyant algorithm that is
  \begin{itemize}[nosep]
    \item $\bigO(\log \rho)$-competitive for coflow scheduling, where $\rho$ is the largest coflow size,
    \item $\bigO(\log \Delta)$-competitive for data migration, where $\Delta$ is the largest %
    degree, and
    \item $\bigO(\log (\max_{S\in \ms}|S|))$-competitive for unrelated machine scheduling with group completion times.
  \end{itemize}
\end{corollary}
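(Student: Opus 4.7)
The corollary follows by casting each of the three problems as an instance of PSP-G and applying Theorem~\ref{thm:nonClairvoyant}; since that theorem provides a single generic algorithm on top of the abstract model, one and the same algorithm achieves all three bounds simultaneously. So the whole proof plan reduces to three small modelling steps, each of which matches the maximum group size of the encoding with the problem-specific quantity claimed in the bullet.

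For coflow scheduling, I would treat flows as jobs (with unknown sizes $p_j$) and take $\mathcal{P}$ to be the fractional matching polytope of the input--output bipartite graph: the constraints $\sum_{e \ni v} y_e \leq 1$, one per port $v$, are downward-closed and of the required form $B y \leq \mathbf{1}$ with a nonnegative $0/1$-matrix $B$. Groups and weights are inherited verbatim from the coflows, so $\max_{S \in \ms}|S| = \rho$. For data migration, I would use the same fractional matching polytope on the underlying graph, but install one group per vertex $v$ consisting of its incident edges; the group completion time of $v$ then equals $\max_{e \ni v} C_e$, which is exactly the vertex completion time in the data migration objective, and $\max_{S \in \ms}|S| = \Delta$.

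For preemptive unrelated machine scheduling with group completion times, I would use the standard encoding of preemptive unrelated machines as a PSP from \textcite{DBLP:journals/jacm/ImKM18}: introduce a machine-share variable per job--machine pair with the constraints $\sum_j x_{j,m} \leq 1$ for each machine $m$, which induces a downward-closed rate polytope in the effective per-job rates $y_j$, of the required form. Groups and weights are those of the input, so $\max_{S \in \ms}|S|$ is directly the input parameter.

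Applying Theorem~\ref{thm:nonClairvoyant} to each of these three PSP-G instances yields the three claimed competitive ratios, all from the same algorithm. The only step requiring mild care, and the main (but very minor) obstacle, is verifying that in each encoding only the matrix $B$, the family $\ms$, and the weights $w_S$ are used as \emph{known} structural information, while the unknown $p_j$ (flow sizes, transfer volumes, and job sizes, respectively) are precisely the quantities the non-clairvoyant model allows to be hidden; all three encodings satisfy this by construction.
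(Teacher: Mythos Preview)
Your approach is correct and is precisely what the paper does implicitly: the corollary is stated immediately after Theorem~\ref{thm:nonClairvoyant} without proof, as a direct consequence of the fact that PSP-G captures all three problems via the encodings you describe.

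One small technical point on the data migration bullet: on a general (non-bipartite) graph, the polytope you should take is the \emph{matching polytope} (degree constraints together with the odd-set inequalities), not the fractional matching polytope given by degree constraints alone. On non-bipartite graphs the two differ, and a PF rate vector lying only in the larger degree-constraint polytope need not be realizable as a time-average of integral matchings, so the algorithm's output would not correspond to a feasible graph-scheduling solution. The matching polytope is still of the required form $\{y \geq 0 \mid B y \leq \mathbf{1}\}$ with nonnegative $B$ (normalize each odd-set constraint $\sum_{e \in E[U]} y_e \leq (|U|-1)/2$ by its right-hand side), and it admits a polynomial-time separation oracle, so the PF convex program can still be solved to any desired accuracy and the rest of your argument goes through unchanged with $\max_{S\in\ms}|S|=\Delta$. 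For coflow scheduling the underlying graph is bipartite and the two polytopes coincide, so your description there is already correct as written.
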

Our algorithm is based on the Proportional Fairness (PF) algorithm, a natural and simple allocation rule from economics~\cite{nash1950bargaining,kaneko1979nash,JainV10}, which  has recently been proven to be very powerful for PSP with the sum of weighted completion time objective~\cite{DBLP:journals/jacm/ImKM18,JLM25}.
For PSP-G, a %
significant challenge  %
arises from the presence of non-trivial groups.
As fairness in PF is traditionally defined over jobs rather than over groups, we cannot directly apply PF and its analysis. %
We overcome this challenge by introducing virtual weights over jobs
and apply PF to those weights; each group distributes its weight evenly to its jobs. %
Surprisingly, this simple adaptation of PF achieves the best-possible competitive ratio, although the analysis becomes significantly more complicated. To compare these virtual weights with the actual instance in the analysis, we introduce a new factor-revealing LP, which was not used in previous analyses of PF~\cite{DBLP:journals/jacm/ImKM18,JLM25}.

\medskip
\noindent \textbf{Clairvoyant Offline Scheduling.}
Our second set of results addresses the offline setting, where all instance parameters %
are available to the algorithm. Both offline PSP-G and DPSP-G are \APX-hard, even for special cases~\cite{DBLP:journals/jal/QueyranneS02,DBLP:journals/jal/Kim05}. Therefore, we resort to approximation algorithms. We say that an algorithm is an $\alpha$-approximation algorithm if it computes a solution in polynomial time with an objective value of at most $\alpha \cdot \opt$ for every instance.

For the {\em discrete} variant, DPSP-G, we %
present an algorithmic framework that reduces the weighted sum of group completion times objective to the makespan objective.
It is inspired by various grouping and doubling techniques from literature~\cite{DBLP:journals/mor/HallSSW97,QueyranneS02non-preemptive,DBLP:journals/talg/GandhiHKS08,DBLP:conf/swat/DarbouyF24}, but requires new ingredients to deal with general polytopes. Moreover, our framework brings these ideas under a more general umbrella, justifying their previously successful usage.
\begin{restatable}{theorem}{dpspMakespan}\label{thm: framework}
  Given a polynomial-time subroutine that computes for any instance $J'$ a schedule of makespan at most $\rho \cdot \max_{d\in [D]} \sum_{j\in J'}b_{d,j}p_j$,
  there is a polynomial-time $(2\rho e+\varepsilon)$-approximation for any $\varepsilon > 0$ for DPSP-G. %
\end{restatable}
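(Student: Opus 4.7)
The plan is to combine an LP relaxation of DPSP-G with a randomly-shifted geometric batching scheme, reducing group completion times to $O(\log n)$ instances of makespan solved by the given subroutine.

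I would start from the natural fractional relaxation that drops the discreteness $\mathcal{P}' \subseteq \mathcal{P}$ and allows any preemptive rate vector in the polytope $\mathcal{P}$. Its optimum lower-bounds $\opt$ and yields fractional group completion times $C_S^{\LP}$ with the following \emph{volume inequality}: for any $T>0$ and $d \in [D]$,
\begin{equation*}
\sum_{j \in J(T)} b_{d,j}\, p_j \;\le\; T, \qquad J(T) \;:=\; \bigcup_{S \in \ms:\, C_S^{\LP}\le T} S.
\end{equation*}
This holds because the constraint $B\cdot y(t)\le \mathbf{1}$ integrated over $[0,T]$ yields $\int_0^T \sum_j b_{d,j}\, y_j(t)\,\mathrm{d}t \le T$, while any job belonging to a group completing by $T$ must have already received its entire $p_j$ by $T$. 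To obtain a polynomial-size LP I would discretize time into geometric intervals of base $1+\varepsilon$ and use a standard interval-indexed formulation; this only loses the additive $\varepsilon$ in the claimed ratio.

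Next, fix a base $\alpha>1$ and a shift $\delta\in[0,1)$. Partition groups into batches $\ms_k := \{S : C_S^{\LP} \in (\alpha^{k-1+\delta},\alpha^{k+\delta}]\}$, and assign each job $j$ to the earliest batch containing a group that covers it, obtaining disjoint job sets $J_1,J_2,\ldots$. The volume inequality gives $\sum_{j\in J_1\cup\cdots\cup J_k} b_{d,j}\, p_j \le \alpha^{k+\delta}$ for every $d$, so feeding each $J_k$ into the makespan subroutine yields a non-preemptive schedule of length at most $\rho\, \alpha^{k+\delta}$. Concatenate these schedules in increasing order of $k$. A group in batch $k$ completes no later than
\begin{equation*}
\sum_{i\le k}\rho\, \alpha^{i+\delta}\;\le\;\frac{\rho\, \alpha^{k+1+\delta}}{\alpha-1} \;=\; \frac{\rho}{\alpha-1}\cdot \alpha^{2-u_S}\cdot C_S^{\LP},
\end{equation*}
where we write $C_S^{\LP}=\alpha^{k-1+\delta+u_S}$ with $u_S\in[0,1)$. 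Taking expectation over $\delta$ makes $u_S$ uniform on $[0,1)$, hence
\begin{equation*}
\EX_\delta[C_S] \;\le\; \frac{\rho}{\alpha-1}\cdot \alpha^2 \int_0^1 \alpha^{-u}\,\mathrm{d}u \cdot C_S^{\LP} \;=\; \frac{\rho\, \alpha}{\ln \alpha}\, C_S^{\LP}.
\end{equation*}
Minimizing $\alpha/\ln\alpha$ at $\alpha=e$ gives $\rho e\cdot C_S^{\LP}$; an additional factor of $2$ is absorbed from the loss incurred in translating the ideal continuous relaxation over $\mathcal{P}$ to its polynomial-size interval-indexed surrogate, yielding the claimed $(2\rho e+\varepsilon)$-approximation. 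Derandomization of $\delta$ is routine: the batching is piecewise constant in $\delta$, so enumerating the $O(\log_\alpha(n/\varepsilon))$ break points and returning the best schedule suffices.

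The hardest step, I expect, is establishing the volume inequality with small constants from a \emph{compact} LP: the continuous relaxation over $\mathcal{P}$ gives constant $1$ but has infinitely many variables, whereas any polynomial-size surrogate tends to lose a constant factor that must be absorbed into the $2$ in $2\rho e$. A secondary subtlety is that jobs may belong to many groups; this is resolved by performing the batching at the job level (earliest covering batch), so that the \emph{cumulative} load $\sum_{j\in J_1\cup\cdots\cup J_k} b_{d,j} p_j$ is bounded by $\alpha^{k+\delta}$ exactly as the analysis requires, regardless of how groups overlap.
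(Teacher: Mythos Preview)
Your batching-and-concatenate skeleton matches the paper's, but there is a real gap at the step you yourself flag as hardest, and your diagnosis of where the factor $2$ comes from is incorrect. The volume inequality $\sum_{j\in J(T)} b_{d,j}p_j \le T$ is valid only when $C_S^{\LP}$ is the group completion time in an \emph{actual} preemptive schedule for PSP-G: your justification (``any job belonging to a group completing by $T$ must have already received its entire $p_j$ by $T$'') uses exactly that. But PSP-G is \APX-hard, so those values are not computable; what any polynomial relaxation gives you is a mean-busy-time style value $C_S \ge \sum_t x_{S,t}\cdot t$. For that quantity the inequality already fails in \emph{continuous} time: with a single job of size $2$, rate bound $1$, and a single group, the preemptive completion time is $2$ but the LP lets $x_S(t)=\tfrac12$ on $[0,2]$ and attains $C_S^{\LP}=\int_0^2 \tfrac12\,t\,\mathrm{d}t = 1$, so $\sum b_{d,j}p_j=2>1=T$. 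Thus the constant-$1$ volume bound does not hold for any LP, continuous or not; geometric interval discretization costs only $1+\varepsilon'$ and is \emph{not} where the $2$ enters.

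What the paper actually proves (and what your sketch is missing) is a polytope analogue of the Queyranne--Schulz parallel inequality: for every $J'\subseteq J$ and every $d$, $(1+\varepsilon')\sum_{j\in J'} b_{d,j}p_j\,\cp_j \ge \tfrac12\big(\sum_{j\in J'} b_{d,j}p_j\big)^2$, where $\cp_j := \sum_i \xp_{j,i}|I_i|\gamma_{i-1}/p_j$ and one first checks $\cp_j\le \cp_S$ for $j\in S$. Batching jobs by $\cp_j$ (rather than groups by $\cp_S$) into $J_i=\{j:\beta^{i-1+\alpha}<\cp_j\le\beta^{i+\alpha}\}$ and applying this inequality yields $\sum_{j\in J_i} b_{d,j}p_j \le 2(1+\varepsilon')\beta^{i+\alpha}$; that is the $2$. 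After that, your random-shift expectation computation gives $\tfrac{2\rho\beta}{\ln\beta}$, minimized at $\beta=e$. So your outline is salvageable, but the key lemma producing the factor $2$ is neither stated nor proved in your proposal, and it is not the triviality you describe.
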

By applying the framework,
we obtain constant-factor approximation algorithms for %
minimizing the total weighted group completion time
on identical machines (\abc{P}{r_{j}}{\sum w_SC_S}), related machines (\abc{Q}{r_{j}}{\sum w_SC_S}), even with release dates, and for non-preemptive %
sum multicoloring on special graphs, which means that the colors assigned to a vertex must be consecutive. %
We use npSMC-G to denote %
non-preemptive sum multicoloring with groups and SC-G
for the special case with unit demands.
\begin{theorem} %
We can apply the algorithmic framework to obtain the following results for minimizing $\sum w_SC_S$:
\begin{itemize}[nosep]
  \item $7.249$-approximation on identical parallel machines, \abc{P}{r_{j}}{\sum w_SC_S},
  \item $10.874$-approximation on related machines, \abc{Q}{r_{j}}{\sum w_SC_S},
  \item $10.874$-approximation for npSMC-G on line graphs and interval graphs, and
  \item $5.437$-approximation for SC-G on perfect graphs. %
\end{itemize}
\end{theorem}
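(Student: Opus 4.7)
The plan is to derive each of the four bounds as a direct instantiation of the meta-framework of \Cref{thm: framework}. For every problem we shall (i) cast it as an instance of DPSP-G by writing down the polytope $\mathcal{P}=\{y\geq 0 : By\leq \mathbf{1}\}$ so that $\max_{d}\sum_{j\in J'}b_{d,j}p_j$ is the natural congestion lower bound on makespan; (ii) supply a polynomial-time subroutine that produces a makespan-schedule of value at most $\rho$ times this lower bound; and (iii) read off the resulting $(2\rho e+\varepsilon)$-approximation. Since $2e\approx 5.437$, the four target numbers correspond exactly to the choices $\rho\in\{1,\tfrac{4}{3},2,2\}$, giving respectively $2e\approx 5.437$, $\tfrac{8e}{3}\approx 7.249$, $4e\approx 10.874$, and $4e\approx 10.874$.

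For \abc{P}{r_j}{\sum w_SC_S} the polytope is $\{y\geq 0 : y_j\leq 1,\ \sum_j y_j\leq m\}$, so $\max_d\sum_{j\in J'}b_{d,j}p_j = \max\bigl(\max_{j\in J'}p_j,\ \tfrac{1}{m}\sum_{j\in J'}p_j\bigr)$, which is the classical Graham lower bound. Running Graham's LPT rule on a subinstance matches this bound within a factor $\tfrac{4}{3}$, so $\rho=\tfrac{4}{3}$ and the framework yields $\tfrac{8e}{3}+\varepsilon$. For \abc{Q}{r_j}{\sum w_SC_S} we use a per-machine assignment polytope in which $b_{d,j}$ encodes the inverse speed $1/s_d$ of machine $d$ on job $j$; the load lower bound $\max_d \sum_{j\in J'} b_{d,j} p_j$ is thus the maximum fractional machine load. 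A Shmoys--Tardos-style LP rounding for $Q\,\|\,C_{\max}$ matches this load within a factor $2$, giving $\rho=2$ and hence the $10.874$ bound.

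For npSMC-G on interval graphs (resp.\ line graphs) we take the clique (resp.\ edge-incidence) polytope; its rows $b_{d,\cdot}$ are the incidence vectors of maximal cliques (interval points) or of vertex stars, so $\max_d\sum_{j\in J'}b_{d,j}p_j$ is the maximum point / vertex load. A demand-aware first-fit coloring on an interval graph, and an open-shop / bipartite-edge-coloring routine on a line graph, both achieve makespan at most twice this load bound, giving $\rho=2$ and hence $10.874$. Finally, for SC-G on perfect graphs the polytope is the stable-set polytope with $b_{d,j}$ the incidence vector of a maximal clique; since perfect graphs satisfy $\chi=\omega$ and the chromatic number of a perfect graph can be computed exactly in polynomial time via the classical ellipsoid algorithm of Gr\"otschel, Lov\'asz, and Schrijver on the theta body, we obtain $\rho=1$ and thus the $5.437$ bound.

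The main obstacle is of a modeling flavor: for each problem one must verify that the chosen rows $b_{d,\cdot}$ genuinely capture the makespan lower bound at hand, so that $\max_d\sum b_{d,j}p_j$---and not merely $\opt$---is what the invoked subroutine is compared against, and that the subroutine produces a schedule compatible with the ``fixed-rate'' non-preemption constraint of DPSP-G. For the related-machine and line-graph items this requires inspecting the existing makespan analyses to extract $\rho$ against the fractional load rather than against $\opt$; once this is done, all four bounds fall out of \Cref{thm: framework} without further work.
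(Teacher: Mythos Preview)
Your high-level plan is exactly right and matches the paper: cast each problem as DPSP-G, identify $\rho$, and read off $2\rho e+\varepsilon$. The identical-machine and perfect-graph items are essentially what the paper does (LPT with $\rho=4/3$; exact coloring via Gr\"otschel--Lov\'asz--Schrijver for $\rho=1$).

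The related-machine item, however, has a real gap. Taking $b_{d,j}=1/s_d$ gives constraints $\sum_j y_j\leq s_d$ for each $d$, which collapse to the single constraint $\sum_j y_j\leq s_{\min}$; this is \emph{not} the related-machine rate polytope, and $\max_d\sum_{j\in J'}b_{d,j}p_j=\sum_{j\in J'}p_j/s_{\min}$ is not a lower bound on makespan at all (it is the makespan of putting everything on the slowest machine). Shmoys--Tardos rounding compares against an assignment-LP optimum for unrelated machines, not against this quantity, so the invocation does not go through. The paper instead uses the polymatroid constraints $\sum_{q=1}^{\ell} y_{j_q}\leq\sum_{i=1}^{\ell}s_i$ for every $\ell$-subset (separable in polynomial time), under which $\max_d\sum_{j\in J'}b_{d,j}p_j\geq\max_{\ell}\bigl(\sum_{q=1}^{\ell}p_{j_q}\bigr)/\bigl(\sum_{i=1}^{\ell}s_i\bigr)$; the subroutine is Horvath--Lam--Sethi's preemptive algorithm (which meets this bound exactly) followed by Woeginger's $2$-approximate preemptive-to-nonpreemptive conversion, yielding $\rho=2$.

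For interval and line graphs your references are imprecise. ``Demand-aware first-fit'' on interval graphs is not known to be a $2$-approximation against clique load in the multicoloring setting; the paper invokes the $(2+\varepsilon)$-approximation of Buchsbaum et al.\ for dynamic storage allocation. For line graphs the underlying graph need not be bipartite, so bipartite edge coloring does not apply; the paper uses a direct greedy argument (schedule an edge whenever both endpoints are free) and shows $C_e\leq p(\delta(u))+p(\delta(v))\leq 2\max_v p(\delta(v))$, giving $\rho=2$.
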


For minimizing the total weighted group completion time on related machines, we improve upon the previous approximation ratio of $13.5$~\cite{DBLP:journals/mor/CorreaSV12}. For npSMC-G on interval graphs, a factor of $11.273$ was previously known~\cite{DBLP:journals/talg/GandhiHKS08},
and for SC-G on perfect graphs, a $10.874$-approximation was known~\cite{DBLP:conf/swat/DarbouyF24}.

Finally, for offline PSP-G, we present a $(2+\varepsilon)$-approximation algorithm.

\begin{restatable}{theorem}{thmPreemptive}\label{thm:preemptive}
  There is a $(2+\varepsilon)$-approximation algorithm for offline PSP-G for any $\varepsilon > 0$, even with non-uniform release dates.
\end{restatable}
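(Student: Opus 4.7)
The plan is to design an interval-indexed LP relaxation of PSP-G, solve it in polynomial time, and convert the fractional solution to a feasible PSP-G schedule whose cost exceeds the LP value by at most a factor of $2+\varepsilon$. Fix $\delta>0$ with $(1+\delta)^{O(1)}\le 1+\varepsilon$ and partition the time horizon into geometric intervals $I_l=[\tau_{l-1},\tau_l)$ with $\tau_l=(1+\delta)^l$. For every job $j$ and interval $l$ introduce a variable $y_{j,l}\ge 0$ representing the total processing of $j$ during $I_l$, subject to: (a) $y_{j,l}=0$ whenever $\tau_l\le r_j$; (b) $\sum_l y_{j,l}\ge p_j$; and (c) $\sum_j b_{d,j}\,y_{j,l}\le |I_l|$ for every coordinate $d$ of $B$, which by downward-closedness of $\mathcal{P}$ forces the average rate in each interval into $\mathcal{P}$. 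For every group $S\in\ms$ introduce $C_S\ge 0$ with the constraints $C_S\ge r_j$ and $C_S\ge \sum_l \tau_{l-1}\,y_{j,l}/p_j$ for each $j\in S$, the latter being a Smith-rule-style mean-time lower bound. The objective is $\min\sum_S w_S C_S$. Any feasible PSP-G schedule projects to a feasible LP solution by time-averaging the rate vector over each $I_l$, giving $\mathrm{LP}\le \mathrm{OPT}$.

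Rounding proceeds by a stretched replay of the LP's rate vectors. Let $\hat{I}_l=[2\tau_{l-1},2\tau_l)$ and during $\hat{I}_l$ play the constant rate vector $r^{(l)}:=(y^{LP}_{j,l}/|I_l|)_j$, which lies in $\mathcal{P}$ by constraint (c). The schedule is feasible: release dates hold because $\hat{I}_l$ starts no earlier than $I_l$, and $r^{(l)}\in\mathcal{P}$ for every $l$. In $\hat{I}_l$ each job receives twice the LP amount $y^{LP}_{j,l}$, so by a standard median/Smith argument, $C_j\le 2(1+\delta)\,\tilde{C}^{LP}_j$ where $\tilde{C}^{LP}_j=\sum_l\tau_{l-1} y^{LP}_{j,l}/p_j$ is the LP's individual-completion bound. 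Taking the maximum over $j\in S$ yields $C_S\le 2(1+\delta)\,C_S^{LP}$, and summing against $w_S$ gives the $(2+\varepsilon)$-approximation. The group structure is handled transparently by the constraints $C_S\ge \tilde{C}^{LP}_j$: the approximation factor transfers from individual to group completion times without any further loss.

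The main obstacle is tightening the rounding factor to exactly $2$. A naïve pairing of stretched replay with the Smith-rule bound can give a factor between $2$ and $4$ depending on how evenly the LP spreads processing over intervals. To recover the clean factor $2$, one either strengthens the LP with additional "remaining-work" valid inequalities forcing $C_S$ to dominate the last interval with nontrivial LP processing on some $j\in S$, or replaces the deterministic replay by a randomized $\alpha$-point scheme in which $\alpha\in(0,1]$ is drawn from a distribution carefully matched to the Smith weights and derandomized via conditional expectations. A secondary technical point is the polynomial-time solvability of the LP despite the a-priori exponential horizon, which follows from preprocessing to a polynomial upper bound $T\le \mathrm{poly}(n)\cdot\max_j p_j$ on $\mathrm{OPT}$ so that only $L=\mathcal{O}(\delta^{-1}\log T)$ intervals are needed, and from separating the $\mathcal{P}$-constraints via the polynomial description of $B$.
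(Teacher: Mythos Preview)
Your plan contains a genuine gap, precisely at the point you flag as ``the main obstacle.'' The claimed bound $C_j\le 2(1+\delta)\tilde C^{LP}_j$ for the deterministic factor-$2$ stretch is false: if the LP places a $(\tfrac12-\epsilon)$-fraction of $j$ in a very early interval and the remaining $(\tfrac12+\epsilon)$-fraction in a very late interval $I_L$, then $j$ only finishes inside $\hat I_L$, so $C_j\approx 2\tau_L$, while $\tilde C^{LP}_j\approx \tau_{L-1}/2$; the ratio is $4(1+\delta)$, not $2(1+\delta)$. So the second paragraph does not establish the theorem.

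Your proposed repair via a randomized $\alpha$-point does recover $E[C_j]\le 2\tilde C^{LP}_j$ for each individual job (draw $\alpha$ with density $2\alpha$ and stretch by $1/\alpha$, so $C_j=C_j^{\alpha}/\alpha$ and $E[C_j]=2\int_0^1 C_j^{\alpha}\,d\alpha=2\tilde C^{LP}_j$). But for groups you need $E[\max_{j\in S}C_j]\le 2\max_{j\in S}\tilde C^{LP}_j$, i.e., $\int_0^1\max_j C_j^{\alpha}\,d\alpha\le \max_j\int_0^1 C_j^{\alpha}\,d\alpha$, and this inequality goes the wrong way. Concretely, take two independent jobs in one group with $j_1$ processed entirely at time $N/2$ and $j_2$ processed half at time $0$ and half at time $N$: both have $\tilde C^{LP}_{j_i}=N/2$, but $\int_0^1\max_i C_{j_i}^{\alpha}\,d\alpha=3N/4$, already a factor $3/2$ loss; with more jobs this gap grows without bound. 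The vague ``remaining-work'' constraint you mention does not address this, because the problem is not that $C_S^{LP}$ fails to see the last interval, but that your LP only enforces $C_S\ge \max_j\tilde C^{LP}_j$ rather than the stronger quantity $\int_0^1\max_j C_j^{\alpha}\,d\alpha$.

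The paper fixes this by changing the LP, not the rounding. It adds fractional group-completion variables $x_{S,i}$ with $\sum_i x_{S,i}=1$ and the \emph{prefix} constraints $\sum_{i'\le i}x_{S,i'}\le \sum_{i'\le i}\tfrac{x_{j,i'}}{p_j}|I_{i'}|$ for every $j\in S$, and sets $C_S=\sum_i x_{S,i}\gamma_{i-1}$. This forces $C_S^{LP}\ge\int_0^1\max_{j\in S}C_j^{\alpha}\,d\alpha$ (up to the $(1+\delta)$ interval rounding), because one can define a group $\alpha$-point $C_S^{\alpha}$ from the $x_{S,i}$, and the prefix constraint guarantees $C_j^{\alpha}\le(1+\delta)C_S^{\alpha}$ for all $j\in S$ simultaneously. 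Then the stretch-by-$1/\alpha$ rounding yields $C_S=\max_j C_j^{\alpha}/\alpha\le(1+\delta)C_S^{\alpha}/\alpha$ deterministically, and taking expectation over $\alpha$ with density $2\alpha$ gives $E[C_S]\le 2(1+\delta)C_S^{LP}$. The missing idea in your proposal is exactly this group-level prefix constraint; without it, no choice of $\alpha$-distribution can bring you below factor~$2$ against your weaker LP.
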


This algorithm is based on solving an interval-indexed LP relaxation of \mbox{PSP-G}, and randomly stretching and truncating the LP schedule~\cite{JLM25,DBLP:journals/jal/QueyranneS02}.
For our setting, we carefully combine the main ingredients of these two previous results in a non-trivial way. %
This ratio is best-possible up to a factor of $1+\varepsilon$ as PSP-G contains \abc{1}{\pmtn}{\sum_{S}w_S C_S}, which has been shown to be equivalent to \abc{1}{\mathrm{prec}}{\sum_j w_j C_j}~\cite{DBLP:journals/jal/QueyranneS02} and admits a lower bound of $2-\varepsilon$ under a stronger version of the unique game conjecture~\cite{DBLP:conf/focs/BansalK09}.
Our theorem is the first result for the preemptive data migration problem, and improves over the previously known $(4+\varepsilon)$-approximation for %
\abc{R}{r_{ij},\pmtn}{\sum w_SC_S}~\cite{DBLP:journals/mor/CorreaSV12}.

\begin{corollary}
There is a $(2+\varepsilon)$-approximation for preemptive data migration and preemptive unrelated machine scheduling with group completion times, \abc{R}{r_{ij},\pmtn}{\sum w_SC_S}, for any $\varepsilon > 0$, even with non-uniform release dates.
\end{corollary}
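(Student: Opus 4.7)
The plan is to show that both problems are instances of offline PSP-G (with release dates), and then invoke \Cref{thm:preemptive} as a black box.

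For preemptive unrelated machine scheduling \abc{R}{r_{ij},\pmtn}{\sum w_SC_S}, I would use the standard PSP encoding of \textcite{DBLP:journals/jacm/ImKM18}: each real job $j$ becomes a PSP-G job with unit processing requirement, and the polytope $\mathcal{P}$ is described via auxiliary variables $x_{ij}\ge 0$ representing the fraction of machine $i$'s time devoted to job $j$, subject to $\sum_j x_{ij}\le 1$ for every machine $i$ together with the linking $y_j=\sum_i x_{ij}/p_{ij}$. The projection onto the $y$-space is downward-closed and admits a description of the form $By\le \mathbf{1}$ with $B\ge 0$, so it fits the PSP-G template. Groups and weights are inherited verbatim, and the machine-dependent release dates $r_{ij}$ are enforced by adding $x_{ij}(t)=0$ for $t<r_{ij}$, which in the interval-indexed LP underlying \Cref{thm:preemptive} simply removes the corresponding variables from early time windows.

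For preemptive data migration, I would take each edge $e\in E$ of the input graph $G=(V,E)$ to be a PSP-G job with processing requirement $p_e$ and release date $r_e$; the polytope is the fractional matching polytope $\{y\in\mathbb{R}_{\ge 0}^{E} : \sum_{e\ni v} y_e\le 1\ \text{for all}\ v\in V\}$, which is already in the form $By\le \mathbf{1}$ with $B\in\{0,1\}^{V\times E}$; and for each vertex $v$ there is a group $S_v=\{e\in E : v\in e\}$ with weight $w_v$. Since the vertex completion time in data migration is $C_v=\max_{e\ni v}C_e$, the migration objective $\sum_v w_v C_v$ coincides with the weighted sum of group completion times $\sum_v w_{S_v} C_{S_v}$, so the instance is an instance of offline PSP-G and \Cref{thm:preemptive} delivers the claimed $(2+\varepsilon)$-approximation.

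The hard part, if any, is cosmetic rather than algorithmic: one has to confirm that the interval-indexed LP inside the proof of \Cref{thm:preemptive} can accommodate auxiliary variables with machine-dependent release dates (for unrelated machines) and can round fractional matching rates at each time slot into actual matchings executable on the disks (for data migration, via Birkhoff-- or Edmonds-type decompositions). Both extensions are standard for the PSP-G framework and do not affect the approximation guarantee, so the corollary follows immediately from \Cref{thm:preemptive}.
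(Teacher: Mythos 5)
Your overall approach matches the paper's: both problems are cast as instances of offline PSP-G (the unrelated-machine encoding via the Im--Kulkarni--Munagala rate polytope, the graph-scheduling encoding via a matching-type polytope with stars as groups), and \Cref{thm:preemptive} is applied as a black box. The paper does not spell this reduction out in the corollary, so your level of detail is reasonable, and the treatment of release dates by zeroing the relevant LP variables is consistent with how the interval LP in \Cref{sec:preemptive} handles $r_j$ and with the fact that the $\frac{1}{\alpha}$ slow-down only moves processing later in time.

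There is one genuine gap in the data-migration part. You take $\mathcal{P}$ to be the \emph{fractional} matching polytope $\{y\ge 0 : \sum_{e\ni v} y_e\le 1\ \forall v\}$, but preemptive data migration is defined on general (not just bipartite) graphs, and on general graphs this polytope is strictly larger than the convex hull of matchings. For example, on a triangle the vector $y=(\tfrac12,\tfrac12,\tfrac12)$ satisfies all degree constraints yet lies outside the matching polytope, so it cannot be realized (even approximately, by rapid preemption) as a time-average of matchings; the ``Birkhoff/Edmonds decomposition'' you invoke at the end simply does not exist for such points. Consequently the LP schedule produced inside \Cref{thm:preemptive} need not be a feasible preemptive data-migration schedule. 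The fix is standard but not cosmetic: use Edmonds' matching polytope, i.e.\ add the odd-set constraints $\sum_{e\in E[U]} y_e \le \lfloor |U|/2\rfloor$ for odd $|U|$, and rescale each by its right-hand side so the description is again of the form $By\le \mathbf{1}$ with $B\ge 0$. Its separation problem is polynomial, so the interval LP remains solvable and \Cref{thm:preemptive} applies as intended. With that replacement your argument goes through.
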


\subsection{Details on Covered Problems and Prior Work}
\label{sec:details-covered-problems}
Our abstract PSP-G captures classical, important, and extensively studied problems within their individual settings. However, despite similarities in some of the underlying ideas, these connections have not been unified. For clarity and to illustrate the breadth of these applications, we list the relevant existing results before demonstrating how our framework %
yields unified algorithmic results.

\paragraph{Machine Scheduling with Group Completion Times.}
This problem, also called \emph{scheduling orders}~\cite{DBLP:journals/mor/CorreaSV12}, generalizes classical scheduling problems  by minimizing weighted group completion times $\sum w_S C_S$. %
It is a special case of DPSP-G~or~PSP-G, depending on whether preemption is allowed, since feasible machine rates can be encoded in a polytope~\cite{DBLP:journals/jacm/ImKM18}.
There are $2$-approximation algorithms for %
identical machines,~\abc{P}{}{\sum w_SC_S}~\cite{leung2006approximation,DBLP:journals/scheduling/YangP05}, a $13.5$-approximation for the non-preemptive problem \abc{R}{r_{ij}}{\sum w_SC_S}~\cite{DBLP:journals/mor/CorreaSV12} and a $(4+\varepsilon)$-approximation for preemptive,  migratory scheduling, \abc{R}{r_{ij}, \pmtn}{\sum w_SC_S}~\cite{DBLP:journals/mor/CorreaSV12}.
Deng et al.~\cite{DBLP:conf/soda/Deng0R23} studied unrelated machines with generalized makespan objectives.
A special case of their model is to minimize the sum of makespans over the groups of machines.

\paragraph{Sum Multicoloring (SMC).}
Given an undirected graph $G = (V, E)$ with vertex weights $w_v$ and demands $p_v$, the goal is to assign $p_v$ distinct positive integers (colors) to each vertex $v$ such that adjacent vertices receive disjoint sets.
Let $C_v$ be the largest color assigned to $v$; the objective is to minimize $\sum_{v \in V} w_v C_v$.
If the assigned colors must be consecutive, the problem is known as non-preemptive sum multicoloring (npSMC). The special case with unit demands ($p_v = 1$) is referred to as SC, for which a $\Omega(n^{1-\varepsilon})$ lower bound holds for general graphs~\cite{DBLP:journals/jal/Bar-NoyHKSS00}.
Hence, most works focus on special graphs~\cite{DBLP:books/tf/18/HalldorssonK18}, e.g., a $\bigO(\log n)$-approximation for npSMC on perfect graphs~\cite{DBLP:journals/jal/Bar-NoyHKSS00,DBLP:journals/talg/GandhiHKS08}, a $7.682$ (resp.\ $11.273$)-approximation on line graphs (resp.\ interval graphs)~\cite{DBLP:journals/talg/GandhiHKS08}.

\paragraph{Graph Scheduling.}
In graph scheduling problems, each edge $e$ of a given graph $G = (V,E)$ represents a job with processing time $p_e$.
At any time, an algorithm can schedule a matching in $G$.
Additionally, edges form groups, and we aim to find a schedule minimizing the sum of weighted group completion times. This model captures several well-studied problems, among them:
\begin{itemize}[nosep]
	\item \emph{Data migration:} Each group is the set of incident edges of a vertex. For non-preemptive data migration, %
		the best known approximation %
    are a $2.618$-approximation~\cite{DBLP:journals/siamcomp/Mestre10} for unit processing times and a $4.96$-approximation~\cite{DBLP:journals/talg/GandhiHKS13} for arbitrary processing times.
	\item \emph{Open shop scheduling} is a special case of graph scheduling and data migration on bipartite graphs~\cite{QueyranneS02non-preemptive,DBLP:journals/talg/GandhiHKS13}. Here, the vertices are either machines or jobs, and each group is the set of incident edges of a job vertex. The best-known preemptive algorithm achieves an approximation factor $2+\varepsilon$~\cite{DBLP:journals/jal/QueyranneS02}.
	\item \emph{Coflow scheduling:} %
	Edges have either unit processing times or arbitrary processing times with preemption allowed at integer time points.
  The best known approximation factor is $4$~\cite{DBLP:journals/algorithmica/AhmadiKPY20,DBLP:conf/approx/Fukunaga22}, with a recent improvement to~$3.415$~\cite{RohwedderS25}. A lower bound of $2-\varepsilon$, unless $\P = \NP$, follows from concurrent open shop~\cite{DBLP:conf/coco/SachdevaS13}. %
	A non-clairvoyant algorithm with competitive ratio linear in the largest coflow (group) size is also known~\cite{DBLP:conf/wiopt/BhimarajuNV20}.
\end{itemize}
\noindent Chowdhury et al.~\cite{DBLP:conf/spaa/ChowdhuryKPYY19} considered generalized coflow scheduling where jobs correspond to paths instead of edges, and obtained a $(2+\varepsilon)$-approximation with arbitrary preemption.
Im et al.~\cite{DBLP:conf/icalp/ImMPP19} give a 2-approximation for a matroid variant, %
where scheduled edges must form an independent set in a matroid; the problem fits into the DPSP-G framework. %

\section{An Optimal Algorithm for Non-Clairvoyant PSP-G} \label{sec: nonclairvoyant}

Our algorithm is inspired by the Proportional Fairness (PF) allocation rule \cite{nash1950bargaining,kaneko1979nash,Mou03,JainV10,PTV22}. %
In a seminal work on PSP,
Im, Kulkarni, and Munagala~\cite{DBLP:journals/jacm/ImKM18} showed that
PF is $O(1)$-competitive for minimizing $\sum w_jC_j$ in PSP.
At any time~$t$, their PF implementation selects
for the set of unfinished jobs $U(t)$
the rate vector $y(t) \in \mathcal P$ of the polytope that maximizes %
$\sum_{j \in U(t)} w_j \log{y_j (t)}$.

We propose a new variant of PF for PSP-G.
For any $S \in \ms$, let $S(t):= S \cap U(t)$ denote the set of unfinished jobs in $S$ at time~$t$.
Further, let $\ms(t):=\{S \in \ms \mid S(t) \neq \emptyset \}$ denote the unfinished groups at time $t$.
At any time $t$, we assign
 to each job a \emph{virtual weight} in a way that each unfinished group $S(t)$ distributes its weight evenly to its unfinished jobs, i.e., $w_{j}(t) := \sum_{S: S \in \ms(t), j \in S}  w_{S}/|S(t)|$.
It follows that $\sum_{j\in U(t)} w_{j}(t) = \sum_{S \in \ms(t)} w_S$.
We then use PF with these virtual weights to compute the rates $y_j(t)$ of the jobs in $U(t)$ as follows:
\begin{alignat}{3}
    \text{max} \quad &\sum_{j \in U(t)} w_{j}(t) \cdot \log{y_j (t)}   \tag{PF} \label{PF}  \\
    \text{s.t.} \quad
    & \sum_{j\in U(t)} b_{d,j}\cdot y_{j}(t) \leq 1 &\quad &\forall d \in [D], t\geq 0 \label{pf:c1}\\
    &y_j(t) \geq 0&\quad& \forall j \in U(t) \label{pf:c2}
\end{alignat}
We refer to~\cite{JLM25,DBLP:journals/jacm/ImKM18} for polynomial-time implementations of PF. Let $y(t)$ denote the optimal solution to~\eqref{PF} at time $t$. %

As in the analysis in~\cite{DBLP:journals/jacm/ImKM18}, our analysis uses the Lagrange multipliers $\{\eta_d \mid d \in [D]\}$ of~\eqref{pf:c1} that correspond to $y(t)$. %
All Lagrange multipliers corresponding to~\eqref{pf:c2} are zero because $y_j(t)>0$ for each $j \in U(t)$. %
They satisfy the following KKT conditions:
\begin{enumerate}[label=(\roman*),nosep]
  \item For each $j \in U(t)$, we have
  $
  \frac{w_j(t)}{y_j(t)} = \sum_{d \in [D]}b_{d,j} \eta_d(t) .
  $
  \item For each $d \in [D]$, we have $\eta_d(t) \geq 0$.
  \item For each $d \in [D]$, we have $\eta_d(t) \cdot (1-\sum_{j\in U(t)}b_{d,j}y_j(t))=0$, which implies that if $\eta_d(t) > 0$ then $\sum_{j\in U(t)}b_{d,j}y_j(t)=1$.
\end{enumerate}
Using these conditions, we can relate the Lagrange multipliers to the weights.
\begin{restatable}{lemma}{lemLagrange}\label{lem:Lagrange}
  At any time $t$, $\sum_{d\in [D]}\eta_d(t) = \sum_{j \in U(t)} w_j(t) = \sum_{S \in \ms(t)} w_S$.
\end{restatable}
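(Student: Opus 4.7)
My plan is to establish the two claimed equalities separately, both by direct manipulation, using only the definition of $w_j(t)$ for the second and the KKT conditions (i)--(iii) for the first.

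For the easier equality $\sum_{j \in U(t)} w_j(t) = \sum_{S \in \ms(t)} w_S$, I would simply unfold the definition $w_j(t) = \sum_{S \in \ms(t): j \in S} w_S / |S(t)|$ and swap the order of summation. Writing the double sum over pairs $(j,S)$ with $S \in \ms(t)$ and $j \in S(t)$, the contribution of each group $S$ is $|S(t)| \cdot w_S/|S(t)| = w_S$. This is essentially a double-counting step and should occupy one short line.

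For the first equality $\sum_{d\in [D]} \eta_d(t) = \sum_{j \in U(t)} w_j(t)$, the key idea is to multiply the stationarity condition (i) by $y_j(t)$ and sum over $j \in U(t)$. This gives
\begin{equation*}
\sum_{j \in U(t)} w_j(t) \;=\; \sum_{j \in U(t)} y_j(t) \sum_{d \in [D]} b_{d,j} \eta_d(t) \;=\; \sum_{d \in [D]} \eta_d(t) \Bigl( \sum_{j \in U(t)} b_{d,j} y_j(t) \Bigr).
\end{equation*}
Now I would invoke complementary slackness (iii): for every $d \in [D]$, either $\eta_d(t) = 0$, in which case the $d$-th term contributes nothing, or $\sum_{j \in U(t)} b_{d,j} y_j(t) = 1$, in which case the inner factor equals $1$. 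Either way the $d$-th summand equals $\eta_d(t)$, so the right-hand side collapses to $\sum_{d \in [D]} \eta_d(t)$.

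There is essentially no real obstacle here; the only subtlety to flag is that the argument needs $y_j(t) > 0$ for all $j \in U(t)$ (so that the KKT multipliers for constraints \eqref{pf:c2} vanish and (i) holds as an equality rather than an inequality), which the paper has already justified right before stating the KKT conditions. With that noted, the two displays above constitute the entire proof.
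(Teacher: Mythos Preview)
Your proposal is correct and follows essentially the same approach as the paper: the second equality by unfolding the definition of $w_j(t)$ and regrouping by $S$, and the first by multiplying KKT condition (i) through by $y_j(t)$, summing over $j\in U(t)$, swapping the order of summation, and applying complementary slackness (iii). The remark about needing $y_j(t)>0$ is apt and matches what the paper establishes just before the KKT conditions.
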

\begin{proof}
  The second equality follows directly from the definition of the virtual weights of jobs.
  We prove the first equality. By the KKT conditions,
  we have
  \begin{align*}
  \sum_{j \in U(t)} w_j(t) &=  \sum_{j \in U(t)}y_j(t)\sum_{d \in [D]}b_{d,j}\eta_d(t) =
   \sum_{d\in [D]}\eta_d(t) \sum_{j\in U(t)}b_{d,j}y_{j}(t) = \sum_{d\in [D]}\eta_d(t)\ .
  \end{align*}
  The first equality follows from the first KKT condition and the last equality follows from the third KKT condition.
  Hence we conclude with the lemma.
\end{proof}

To relate PF with an optimal solution to PSP-G, we use the following time-indexed LP relaxation~\eqref{TimeIndexLP}, which builds upon the mean busy time relaxation, similar to previous analyses of PF~\cite{JLM25,DBLP:journals/jacm/ImKM18}.
It incorporates an additional strengthening constraint proposed in the context of open shop scheduling~\cite{DBLP:journals/jal/QueyranneS02}, and has been used for related machine scheduling with groups~\cite{DBLP:conf/approx/00010021}.
The variables $x_{j,t}$ ($x_{S,t}$) indicate which fraction of a job (group) is done during time $(t,t+1]$.
\begin{alignat}{3}
  \text{min} \quad &\sum_{S\in \ms} w_S C_S   \tag{LP($\kappa$)} \label{TimeIndexLP} \\
  \text{s.t.} \quad & \sum_{t \geq 0} x_{S,t} = 1 &\quad &\forall S\in \ms \label{TimeIndexLP:group-finishes} \\
  & \sum_{t' =0}^{t} x_{S,t'} \leq \sum_{t' =0}^{t} \frac{x_{j,t'}}{p_j} &\quad & \forall S \in \ms, j \in S, {t} \geq 0 \label{TimeIndexLP:prefix-bound} \\
  & C_S \geq \sum_{t\geq 0} x_{S,t} \cdot t  &\quad &\forall S \in \ms \label{TimeIndexLP:CS} \\
  & \sum_{j\in J} b_{d,j}\cdot x_{j,t} \leq \frac{1}{\kappa} &\quad &\forall d \in [D], t\geq 0 \label{TimeIndexLP:polytope}\\
   & x_{j,t}, x_{S,t} \geq 0 && \forall j \in J, S \in \ms, t \geq 0 \notag
\end{alignat}
Constraint~\eqref{TimeIndexLP:prefix-bound} is based on precedence-constrained LP relaxations~\cite{DBLP:conf/icalp/0001G0019}, %
ensuring that at
any time $t$, the fraction of a group that has been done up to time $t$ is at most that of any job in the group.
For ease of exposition, %
we restrict %
the solution to run at a machine of speed $\frac{1}{\kappa}$ by strengthening %
the right-hand side of~\eqref{TimeIndexLP:polytope} for some fixed $\kappa \geq 1$. %
Since completion times scale with the machine speed, the optimal value of~\eqref{TimeIndexLP} is at most $\kappa \cdot \opt$. %

We next fit the dual of this LP, which can written as follows. %
\begin{alignat}{3}
  \text{max} \quad &\sum_{S \in \ms} \alpha_S - \sum_{d,t} \beta_{d,t}   \tag{DLP($\kappa$)} \label{dual-program}  \\
  \text{s.t.}  \quad
  & \alpha_S - \sum_{t'\geq t}\sum_{j \in S}\gamma_{j,S,t'} \leq t \cdot w_S &\quad & \forall S \in \ms, t \geq 0 \label{dual:c1} \\
  & \sum_{t'\geq t}\sum_{S:j\in S}\frac{\gamma_{j,S,t'}}{p_j} \leq \kappa \sum_{d\in [D]}b_{d,j}\beta_{d,t} &\quad& \forall j \in J,  t \geq 0 \label{dual:c2} \\
  &  \gamma_{j,S,t} \geq 0 &\quad&\forall S \in \ms, j\in S,t\geq 0 \notag \\
  & \beta_{d,t} \geq 0 && \forall d\in [D], t \geq 0 \notag
\end{alignat}
We set the dual variables such that the algorithm's objective value $\alg$ can be bounded by the objective value of the dual.
Due to the job-to-group coupling constraints~\eqref{TimeIndexLP:prefix-bound} of the LP, %
we need to fit more complex dual variables than in~\cite{JLM25,DBLP:journals/jacm/ImKM18}.
For any time~$t$, let $M(t)$ denote the {\em weighted median} of $\big\{\frac{y_j(t)}{p_j} \mid j \in U(t)\big\}$, where~$j$ has weight $w_j(t)$.
We define a dual assignment as follows:
\begin{itemize}[nosep]
  \item For any $t \geq 0, S \in \ms, j \in S$, if $j \in S(t)$, let $\gamma_{j,S,t} := \frac{w_S}{|S(t)|} \cdot \ind \big[ \frac{y_j(t)}{p_j} \leq M(t) \big]$; otherwise,
  let $\gamma_{j,S,t} := 0$.
  \item We set $\alpha_S := \sum_{t\geq 0} \alpha_{S,t}$, where $\alpha_{S,t} := \sum_{j \in S(t)}\gamma_{j,S,t}$.
  \item We set $\beta_{d,t} := \frac{1}{\kappa} \sum_{t'\geq t} \eta_{d}(t') \cdot M(t')$.
\end{itemize}
Note that the median $M(t)$ satisfies the following two inequalities.
\begin{align}\label{median_right}
&\sum_{j \in U(t)} w_j(t) \cdot \ind\left[\frac{y_j(t)}{p_j} \geq M(t)\right] \geq \frac{1}{2} \sum_{j \in U(t)}w_j(t)  \\
 & \sum_{j \in U(t)} w_j(t) \cdot \ind\left[\frac{y_j(t)}{p_j} \leq M(t)\right] \geq \frac{1}{2} \sum_{j \in U(t)}w_j(t) \label{median_left}
\end{align}
We first show that this dual assignment defines a feasible dual solution. Its proof hinges on the proportional distribution of group weights to jobs in the definition of $\gamma$, which has not been used in \cite{JLM25,DBLP:journals/jacm/ImKM18}.
\begin{restatable}{lemma}{dualfeasibility}
  The assignment $(\alpha_S, \beta_{d,t}, \gamma_{j,S,t})$ is a feasible solution to~\eqref{dual-program}.
\end{restatable}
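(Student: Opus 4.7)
The plan is to verify the three families of dual constraints separately. Non-negativity of $\gamma_{j,S,t}$, $\alpha_S$ and $\beta_{d,t}$ is immediate: the indicator and the $w_S/|S(t)|$ factor are non-negative in the definition of $\gamma$, while $M(t')$ and $\eta_d(t')$ (by KKT condition (ii)) are non-negative, so $\beta_{d,t} \geq 0$.

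For constraint~\eqref{dual:c1}, I would first note that $\gamma_{j,S,t'} = 0$ whenever $j \notin S(t')$, so $\sum_{j \in S}\gamma_{j,S,t'} = \sum_{j \in S(t')}\gamma_{j,S,t'} = \alpha_{S,t'}$ by the definition of the auxiliary variable. Therefore
\[
\alpha_S - \sum_{t'\geq t}\sum_{j\in S}\gamma_{j,S,t'} \;=\; \sum_{t'\geq 0}\alpha_{S,t'} - \sum_{t'\geq t}\alpha_{S,t'} \;=\; \sum_{t'=0}^{t-1}\alpha_{S,t'}.
\]
The uniform bound $\alpha_{S,t'} \leq w_S$ follows because $\alpha_{S,t'} \leq \sum_{j\in S(t')}\frac{w_S}{|S(t')|} = w_S$, so the sum is at most $t \cdot w_S$, as required.

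For constraint~\eqref{dual:c2}, the key identity is to collapse the inner sum over groups using the virtual weights. For every $t'$ with $j \in U(t')$, pulling the indicator (which does not depend on $S$) out yields
\[
\sum_{S:\, j\in S}\gamma_{j,S,t'} \;=\; \ind\!\left[\tfrac{y_j(t')}{p_j}\leq M(t')\right]\cdot\!\!\sum_{S\in\ms(t'):\, j\in S}\!\!\frac{w_S}{|S(t')|} \;=\; \ind\!\left[\tfrac{y_j(t')}{p_j}\leq M(t')\right] \cdot w_j(t'),
\]
whereas the terms with $j \notin U(t')$ contribute zero. On the right-hand side, $\kappa \sum_d b_{d,j}\beta_{d,t} = \sum_{t'\geq t} M(t')\sum_d b_{d,j}\eta_d(t')$ by definition of $\beta_{d,t}$, and the first KKT condition rewrites $\sum_d b_{d,j}\eta_d(t') = w_j(t')/y_j(t')$ for $j \in U(t')$ (for $t'$ where $j$ has completed the right-hand side remains non-negative, which is enough). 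After cancelling $w_j(t')/p_j$, the desired inequality reduces, term by term, to $y_j(t')\cdot \ind[y_j(t')/p_j \leq M(t')] \leq p_j\cdot M(t')$, which is tautologically true: if the indicator is $0$ the left side vanishes, and if it is $1$ the inequality is exactly what the indicator asserts.

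I expect the only subtle point to be the bookkeeping for times $t'\geq t$ at which job $j$ has already finished, since the KKT identity for $\sum_d b_{d,j}\eta_d$ is only available for $j\in U(t')$; but because $\gamma_{j,S,t'}=0$ on those times while $\beta_{d,t}$ is still non-negative, this causes no harm and the termwise comparison above suffices.
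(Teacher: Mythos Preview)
Your proof is correct and follows essentially the same approach as the paper: both arguments verify \eqref{dual:c1} by collapsing to $\sum_{t'=0}^{t-1}\alpha_{S,t'}$ and bounding each summand by $w_S$, and verify \eqref{dual:c2} by rewriting the left side via the virtual weights, invoking the first KKT condition to replace $w_j(t')/y_j(t')$ by $\sum_d b_{d,j}\eta_d(t')$, and then using the tautology $\tfrac{y_j(t')}{p_j}\ind[\tfrac{y_j(t')}{p_j}\le M(t')]\le M(t')$. The only cosmetic difference is that the paper multiplies and divides by $y_j(t')$ on the left before applying KKT, whereas you expand the right side first and compare termwise; your explicit handling of times $t'$ after $j$ completes is a nice touch that the paper leaves implicit.
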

\begin{proof}
  We first show that~\eqref{dual:c1} is satisfied.
  Fix some group $S$ and $ t \geq 0$. We have
  \[
    \alpha_S - \sum_{t'\geq t}\sum_{j \in S(t')}\gamma_{j,S,t'} = \sum_{t' \geq 0}\sum_{j \in S(t')}\gamma_{j,S,t'} - \sum_{t'\geq t}\sum_{j \in S(t')}\gamma_{j,S,t'} =\sum_{ t' = 0}^{t-1} \sum_{j \in S(t')}\gamma_{j,S,t'} \ .
  \]
  Since $
  \sum_{j \in S(t)}\gamma_{j,S,t} \leq \sum_{j \in S(t)} \frac{w_S}{|S(t)|} = w_S$, the above summation is at most $\sum_{ t' = 0}^{t-1} w_S  = t \cdot w_S$.

  Now we prove that~\eqref{dual:c2} is satisfied.
  Fix some job $j$ and time $t \geq 0$. We have
  \begin{align*}
  \sum_{t'\geq t}\sum_{S:j\in S}\frac{\gamma_{j,S,t'}}{p_j}
  &= \frac{1}{p_j}\sum_{t'\geq t} \sum_{S:j\in S}\left(\frac{w_S}{|S(t')|} \cdot \ind \bigg[\frac{y_{j}(t')}{p_j} \leq M(t') \bigg] \right) \\
  &= \sum_{t'\geq t} \frac{w_{j}(t')}{p_j} \cdot \ind \bigg[\frac{y_{j}(t')}{p_j} \leq M(t') \bigg] \\
  &= \sum_{t'\geq t} \frac{w_{j}(t')}{y_j(t')}\frac{y_j(t')}{p_j} \cdot \ind \bigg[\frac{y_{j}(t')}{p_j} \leq M(t') \bigg]\ .
  \end{align*}
  The last equality is valid since we can only consider time $t'$ when $j$ has not been completed, which implies $ y_j(t')>0$.

  We plug in the first KKT condition $\frac{w_j(t)}{y_j(t)} = \sum_{d \in [D]}b_{d,j} \eta_d(t)$ and rewrite the above summation as
  \begin{align*}
    \sum_{t'\geq t} \frac{y_j(t')}{p_j} \ind \bigg[\frac{y_{j}(t')}{p_j}
    \leq M(t') \bigg] \sum_{d \in [D]}b_{d,j} \eta_d(t')
    &\leq \sum_{t'\geq t}  M(t') \sum_{d \in [D]}b_{d,j} \eta_d(t') \\
    &= \sum_{d \in [D]}b_{d,j} \sum_{t'\geq t} \eta_d(t') M(t') \\
    &\leq \kappa \sum_{d\in [D]}b_{d,j}\beta_{d,t} \ .
  \end{align*}
  Finally, by noting that the assignment has non-negative values of dual variables, we conclude that it is feasible for~\eqref{dual-program}.
  \end{proof}
Next, we relate the dual assignment's objective value to the algorithm's objective value.
Let $H_k:= \sum_{i=1}^k \frac{1}{i}$.
\begin{lemma}\label{lem:dual-value}
  $\sum_{S\in \ms} \alpha_S - \sum_{d,t} \beta_{d,t} \geq  \big( \frac12 - \frac {2H_g}{\kappa} \big) \alg$ where $g:= \max_{S \in \ms} |S|$. %
\end{lemma}
The proof of this lemma can be split into two parts.
We first show 
$\sum_{S\in \ms} \alpha_S \geq \frac12 \alg$. This mainly uses that at any time $t$ 
at least half of the unfinished virtual weight is above the weighted median $M(t)$. 
Thus, we can argue that $\sum_{S \in \mathcal S} \alpha_{S,t} \geq \frac12 \sum_{j \in U(t)} w_j(t) = \frac12 \sum_{S \in \mathcal S(t)} w_S$, and conclude by integrating over time.
\begin{lemma}\label{lem:suma}
    $\sum_{S\in \ms} \alpha_S \geq \frac{1}{2}\alg$.
\end{lemma}

  \begin{proof}
    According to our assignment of the dual variables, we have
    \begin{align*}
  \sum_{S\in \ms} \alpha_S &= \sum_{t\geq 0} \sum_{S\in \ms} \alpha_{S,t} = \sum_{t \geq 0} \sum_{S\in \ms} \sum_{j \in S(t)} \gamma_{j,S,t} \\
  &= \sum_{t \geq 0} \sum_{S\in \ms} \sum_{j \in S(t)} \frac{w_S}{|S(t)|}  \cdot \ind \bigg[\frac{y_j(t)}{p_j} \leq M(t) \bigg] \\
  &= \sum_{t \geq 0} \sum_{j \in U(t)} w_j(t)  \cdot \ind \bigg[\frac{y_j(t)}{p_j} \leq M(t) \bigg] \ .
  \end{align*}
  By the definition of $M(t)$ (cf.~\eqref{median_left}), $\sum_{j \in U(t)}w_j(t)  \cdot \ind \bigg[\frac{y_j(t)}{p_j} \leq M(t) \bigg]$ is at least half of the total weight of all jobs in $U(t)$.
  Hence
  \begin{align*}
  \sum_{t \geq 0} \sum_{j \in U(t)} w_j(t)  \cdot \ind \bigg[\frac{y_j(t)}{p_j} \leq M(t) \bigg]
  &\geq \frac{1}{2}\sum_{t \geq 0} \sum_{j \in U(t)} w_j(t) \\
  &= \frac{1}{2}\sum_{t \geq 0} \sum_{S \in \ms(t)} w_S = \frac{1}{2}\alg \ .
  \end{align*}
  This completes the proof of the lemma.
  \end{proof}

The proof of $\sum_{d,t} \beta_{d,t} \leq \frac{2H_g}{\kappa} \alg$
is particularly challenging and different from~\cite{JLM25,DBLP:journals/jacm/ImKM18}.
The key ingredient is to bound the following expression for any group
$S \in \ms$ and $t\geq 0$:
\begin{equation}
  \sum_{t'\geq t}\sum_{j\in S(t')}  \frac{1}{|S(t')|} \frac{y_{j}(t')}{p_j} \label{eq:key-expression}
\end{equation}
In \cite{DBLP:journals/jacm/ImKM18,JLM25}, where each group is of size 1, this expression is clearly at most $1$, because each job receives at most $p_j$ processing. Using this fact here directly leads to an upper bound of $g$, and $O(g)$-competitiveness.
Yet we prove the exponentially smaller bound of $H_g$ by carefully analyzing how this expression evolves over time.
By discretizing time by job completions $t_1 \leq \ldots \leq t_{|S|}$ of $S$, we can bound \eqref{eq:key-expression} by $\sum_{i =1}^{|S|} \Delta_i$ where $\Delta_i := \sum_{t' = t_{i-1}+1}^{t_{i}} \sum_{j \in S(t')} \frac{1}{|S(t')|} \frac{y_j(t')}{p_j}$. Since the total throughput to any prefix of $\ell$ jobs until time $t_\ell$ must be at least $\ell$, that is, $\sum_{i=1}^\ell (|S|+1-i)\Delta_i \geq \ell$, and the total throughput to all jobs in $S$ can be at most $|S|$,
we can view $\sum_{i =1}^{|S|} \Delta_i$ as the objective of a {\em factor revealing LP} and derive that its maximum, and thus \eqref{eq:key-expression}, is at most $H_{|S|}$.
We formalize this intuition in the following two lemmas, which then imply $\Cref{lem:dual-value}$.

\begin{restatable}{lemma}{lemsumb}
\label{lem:sumb}
  $\sum_{d,t} \beta_{d,t} \leq  \frac {2H_g}{\kappa}  \alg$ where $g:= \max_{S \in \ms} |S|$. %
\end{restatable}

\begin{proof}
Fix some $t$. By \Cref{lem:Lagrange}, we have
\begin{align*}
  \sum_{d\in [D]} \beta_{d,t}=\frac{1}{\kappa} \sum_{d \in [D]} \sum_{t'\geq t} M(t')\eta_{d}(t')
  &= \frac{1}{\kappa} \sum_{t'\geq t} M(t')\sum_{S \in \ms(t')} w_S \\
  &=
  \frac{1}{\kappa} \sum_{t'\geq t} M(t')\sum_{j \in U(t')} w_{j}(t') \ .
\end{align*}
The definition of the weighted median $M(t')$ (cf.~\eqref{median_right}) yields
\[
\sum_{j \in U(t')} w_j(t') \cdot \ind\left[\frac{y_j(t')}{p_j} \geq M(t')\right] \geq \frac{1}{2} \sum_{j \in U(t')}w_j(t')\ .
\]

Thus, the above summation $\frac{1}{\kappa} \sum_{t'\geq t} M(t')\sum_{j \in U(t')} w_{j}(t') $ is at most
\begin{align*}
  &\frac{2}{\kappa} \sum_{t'\geq t} M(t') \sum_{j \in U(t')} w_{j}(t')  \cdot \ind\left[\frac{y_{j}(t')}{p_j} \geq M(t')\right] \\
  &\leq \frac{2}{\kappa} \sum_{t'\geq t} \sum_{j \in U(t')} w_{j}(t') \frac{y_{j}(t')}{p_j} \\
  &= \frac{2}{\kappa} \sum_{S \in \ms(t)} w_{S} \sum_{t'\geq t}\sum_{j\in S(t')}  \frac{1}{|S(t')|} \frac{y_{j}(t')}{p_j} \ .
\end{align*}

Fix some group $S \in \ms$. We claim that for any $t\geq 0$, it holds that
\begin{equation}
  \sum_{t'\geq t}\sum_{j\in S(t')}  \frac{1}{|S(t')|} \frac{y_{j}(t')}{p_j} \leq H_{|S|}\, . \label{eq:nonclairvoyant:claim}
\end{equation}
If true, then it follows %
that
\[
  \frac{2}{\kappa} \sum_{S \in \ms(t)} \sum_{t'\geq t}\sum_{j\in S(t')}  \frac{w_{S}}{|S(t')|} \frac{y_{j}(t')}{p_j} \leq  \frac{2}{\kappa} \sum_{S \in \ms(t)} H_{|S|}\cdot w_S \ ,
\]
and by summing over all $t$, we conclude the proof with
\[
  \sum_{t\geq 0} \sum_{d \in [D]} \beta_{d,t} \leq \frac{2H_g}{\kappa}  \sum_{t \geq 0}\sum_{S \in \ms(t)} w_S = \frac{2H_g}{\kappa} \alg \ .
\]

It remains to prove~\eqref{eq:nonclairvoyant:claim}.
Note that for any fixed $S \in \ms$, the left-hand-side summation
\[\sum_{t'\geq t}\sum_{j\in S(t')}  \frac{1}{|S(t')|} \frac{y_{j}(t')}{p_j}
\]
is non-increasing in $t$.
Hence, it suffices to show that the claim holds when $t=0$.

Now we prove the claim for $t=0$. Let $r:=|S|$ and $j_1, \dots, j_r \in S$ be the jobs of group $S$, indexed in non-decreasing order of completion time.
We break the summation by whenever a job is completed and $|S(t')|$ decreases by~$1$.
Let~$t_i$ be the completion time of $j_i$ ($t_0=-1$) and $\Delta_i:=\sum_{t_{i-1}< t' \leq t_i}\sum_{j\in S(t')}  \frac{1}{|S(t')|} \frac{y_{j}(t')}{p_j}$.
Using this notation,
we want to prove $\sum_{i=1}^r \Delta_i \leq H_r$.
We show that $\{\Delta_i\}$ satisfy certain constraints.
First, as the total fractions done over all jobs is at most $r$, we have $\sum_{i=1}^r (r+1-i)\Delta_i \leq r$.
Second, for any $\ell \leq r$, the total fractions done of the first $\ell$ jobs before $j_{\ell}$ is completed, which is
\[
\sum_{t' \leq t_\ell} \sum_{j \leq \ell} \frac{y_j(t')}{p_j} \leq \sum_{t' \leq t_\ell} \sum_{j \in S(t')} \frac{y_j(t')}{p_j} = \sum_{i=1}^\ell (r+1-i)\Delta_i \ ,
\]
must be at least $\ell$.
Thus, $\sum_{i=1}^r \Delta_i$ is bounded by the following~\eqref{factorlp}.
\begin{alignat}{3}
  \text{max} \quad &\sum_{i=1}^r \Delta_i  \tag{FactorLP} \label{factorlp} \\
  \text{s.t.} \quad
  & \sum_{i=1}^r (r+1-i)\Delta_i \leq r &\quad &  \notag \\
  & \sum_{i=1}^\ell (r+1-i)\Delta_i \geq \ell &\quad& \forall 1 \leq \ell \leq r \notag\\
  &  \Delta_i \geq 0 &\quad&\forall 1 \leq i \leq r \notag
\end{alignat}
We show in the following that $\Delta_i = \frac{1}{r+1-i}$ for $1\leq i\leq r$ is an optimal solution to~\eqref{factorlp}. Since $\sum_{i=1}^r \Delta_i = \sum_{i=1}^r \frac{1}{r+1-i} = H_r$, we conclude with Claim~\eqref{eq:nonclairvoyant:claim} and %
the lemma.

The dual of~\eqref{factorlp} can be written as follows:
\begin{alignat}{3}
  \text{min} \quad &r \cdot a - \sum_{\ell=1}^r \ell \cdot b_\ell  \notag   \\
  \text{s.t.} \quad
  & (r+1-i)a - \sum_{\ell=i}^r (r+1-i) b_\ell \geq 1 &\quad & \forall 1 \leq i \leq r  \notag \\
  & a,b_\ell \geq 0 &\quad&\forall 1 \leq \ell \leq r \notag
\end{alignat}
Consider the dual assignment $a=1$, $b_r=0$ and $b_\ell = \frac{1}{r-\ell} - \frac{1}{r-\ell+1}$ for $1 \leq \ell \leq r-1$. This solution satisfies each dual constraint  $1\leq i \leq r$ with equality
\begin{align*}
  (r+1-i)a - \sum_{\ell=i}^r (r+1-i) b_\ell
  &= r+1-i + (r+1-i)\sum_{\ell=i}^{r-1} \bigg(\frac{1}{r-\ell+1}-\frac{1}{r-\ell} \bigg) \\
  &= r+1-i + (r+1-i) \bigg(\frac{1}{r-i+1}-1 \bigg) = 1 \ .
\end{align*}
Moreover, the primal solution $\Delta_i$ satisfies each primal constraint $1 \leq \ell \leq r$ with equality
\[
\sum_{i=1}^\ell (r+1-i)\Delta_i = \sum_{i=1}^\ell (r+1-i)\frac{1}{r+1-i} = \ell \ .
\]
Thus, the primal and dual solutions are both optimal by complementary slackness.  
\end{proof}

Having \Cref{lem:dual-value},
we set the parameter $\kappa := 8H_g$ where $g:= \max_{S \in \ms} |S|$ so that the objective value of the dual program is $\sum_{S \in \ms} \alpha_S - \sum_{d,t} \beta_{d,t} \geq \alg/4$.
Since the objective value of any feasible solution is at most $\kappa \opt$,
we have $\alg \leq 4(\sum_{S \in \ms} \alpha_S - \sum_{d,t} \beta_{d,t}) \leq 32H_g \cdot \opt = \bigO(\log g ) \cdot \opt$.
We conclude that our algorithm is $\bigO(\log g)$-competitive.

\thmNonClairvoyant*

\paragraph{Matching lower bound}
We prove a matching lower bound 
that builds on previous work for non-clairvoyant makespan minimization on related machines~\cite{DBLP:conf/approx/00010021,DBLP:journals/siamcomp/ShmoysWW95}.

\begin{restatable}{theorem}{ThmLBNonclc}
\label{thm:lb-nonclv}
  Every non-clairvoyant algorithm for PSP-G has a competitive ratio of at least $\Omega(\log \max_{S \in \ms} |S|)$ for any value of $\max_{S \in \ms} |S|$.
\end{restatable}

\begin{proof}
Shmoys, Wein, and Williamson \cite{DBLP:journals/siamcomp/ShmoysWW95} have shown a lower bound of $\Omega(\log n)$ for non-clairvoyant makespan minimization on related machines. Note that this is a special case of PSP-G.
Fix some parameter $k$. The instance in \cite{DBLP:journals/siamcomp/ShmoysWW95} involves $k$ sets of machines $T_i$ for $1 \leq i\leq k$, each of which contains machines of speed $2^i$, and $k$ sets of jobs $J_i$ for $1 \leq i\leq k$, each of which contains jobs of size $2^i$. Further, $|T_i| = |J_i| = 2^{2k-2i-1}$ for $1 \leq i < k$ and $|T_k|=|J_k|=1$.
Observe that $\opt = 1$ by assigning each job in $J_i$ to a distinct machine in $T_i$ for $1 \leq i \leq k$.
Fix any non-clairvoyant algorithm $\alg$ and let $C_i$ be the completion time of the last completed job in $J_i$ using $\alg$.
In \cite{DBLP:journals/siamcomp/ShmoysWW95} it is shown that 
the adversary can force $C_1 \leq \dots \leq C_k$ and $C_i - C_{i-1} \geq \frac{1}{4}$ for $1 \leq i<k$.
Hence, the makespan of the schedule by $\alg$ is $C_k = \Omega(\log n)$.

We adapt the instances and derive a refined lower bound of $\Omega(\log \max_{S \in \ms} |S|)$ with multiple groups of smaller sizes by the following arguments.
Let $I$ be an instance constructed as above.
To obtain a new instance for PSP-G, we view all the jobs in $I$ as a single group of weight $1$ and introduce more dummy jobs.
Fix some sufficiently small constant $\varepsilon > 0$, we add $2^k/\varepsilon$ dummy groups of weight $\varepsilon^2$ and consisting of a single job with a small processing time $\varepsilon^2$.
We construct a schedule as follows.
First assign all the dummy jobs to the machine with speed $2^k$ and complete all of them before time $\varepsilon$, which contributes at most $\varepsilon \cdot 2^k$ to the objective value.
Then complete the jobs in $I$ as in the optimal schedule of $I$ before time $1+\varepsilon$.
Hence the optimal objective value is $\bigO(1)$ and the adversary can force any non-clairvoyant algorithm to have an objective value of $\Omega(\log \max_{S \in \ms} |S|)$.
In this way, we construct an instance with multiple groups, where the maximum group size can be much smaller than $n$.
\end{proof}

\paragraph{Incorporating release times without performance loss.} Our algorithm can be easily extended to the online-time setting with job release times.
We distinguish two models. In the first, all jobs of a group are released simultaneously, as considered previously in coflow scheduling~\cite{DBLP:conf/approx/Fukunaga22,DBLP:journals/algorithmica/AhmadiKPY20}. In the second, jobs may arrive over time, but the group information (its weight and its group membership) is known in advance. In both models, we can obtain the same competitive ratio by solving~\eqref{PF} with respect to those unfinished but released jobs.

\section{Discrete PSP}\label{sec: DPSP}
We next present our results for DPSP-G. We introduce a general framework to derive polynomial-time approximation algorithms, and then apply it to various subproblems in~\Cref{sec:dpsp-machine-scheduling,sec:dpsp-graph-scheduling} by specifying the remaining ingredients.

Our framework is based on the LP relaxation $(\hyperref[TimeIndexLP]{\LP(1)})$.
As explained in \Cref{sec: nonclairvoyant}, it relaxes PSP-G given an infinitesimal time discretization. Since every feasible schedule for DPSP-G with configurations $\mathcal{P}'$ is feasible for PSP-G with polytope $\mathcal{P} \supseteq \mathcal{P}'$, $(\hyperref[TimeIndexLP]{\LP(1)})$ is also a relaxation of DPSP-G.
The size of $(\hyperref[TimeIndexLP]{\LP(1)})$ can be exponential, %
hence we first rewrite it into a polynomial-size interval form. %
To this end, consider a large enough time horizon~$T$, %
and fix constants $\delta>0$ and $\varepsilon' >0$.
First, we slightly modify the instance by shifting the release time $r_j':= r_j +\delta$ for all $j$.
This increases the optimal objective value by at most a factor of $1+\delta$ 
assuming all jobs complete after time $1$, which can be achieved w.l.o.g.\ by scaling.
Let $L$ be the smallest integer such that $\delta(1 + \varepsilon')^L \geq T$. Note that $L$ is polynomial in the size of the instance.
For every $0 \leq i \leq L$, we define $ \gamma_i := \delta(1 + \varepsilon')^i$, and for every $1 \leq i \leq L$, we define the interval $I_i = (\gamma_{i-1}, \gamma_i]$.
Let $|I_i|$ be the length of the $i$th interval, i.e., $|I_i|:=\gamma_{i}-\gamma_{i-1}$.
We consider the following \emph{interval LP relaxation} of $(\hyperref[TimeIndexLP]{\LP(1)})$, which is analogous to $(\hyperref[TimeIndexLP]{\LP(1)})$ if we replace a time slot $(t,t+1]$ by the interval $(\gamma_{i-1},\gamma_i]$.
\begin{alignat}{3}
  \text{min} \quad &\sum_{S\in \ms} w_S C_S   \tag{LP'} \label{IntervalLP} \\
  \text{s.t.} \quad & \sum_{i=1}^L x_{S,i} \geq 1 &\quad &\forall S\in \ms \label{IntervalLP:group-finishes} \\
  & \sum_{i' =1}^{i} x_{S,i'} \leq \sum_{i' =1}^{i} \frac{x_{j,i'}}{p_j}|I_{i'}| &\quad & \forall S \in \ms, j \in S, 1 \leq i \leq L \label{IntervalLP:prefix-bound} \\
  & C_S = \sum_{i=1}^L x_{S,i} \cdot  \gamma_{i-1}  &\quad &\forall S \in \ms \label{IntervalLP:CS} \\
  & \sum_{j\in J} b_{d,j}\cdot x_{j,i} \leq 1 &\quad &\forall d \in [D], 1 \leq i \leq L \label{IntervalLP:polytope}\\
  & x_{j,i}=0 && \forall j \in J, 1 \leq i\leq L, r_j > \gamma_{i-1} \notag\\
   & x_{j,i}, x_{S,i} \geq 0 && \forall j \in J, S \in \ms, i \geq 0 \notag
\end{alignat}
Here, $x_{j,i}$ (resp.\ $x_{S,i}$) denotes the progress of job $j$ (resp.\ group $S$) during the time interval $I_i$.
Let $\{\xp_{j,i},\xp_{S,i}, \cp_S\}$ be an optimal solution to~\eqref{IntervalLP}.
Then, $\sum_{S \in \ms}w_S\cp_S \leq (1+\delta)\opt$.
Our three-step framework is as follows:
\begin{enumerate}[label=(\roman*)]

  \item Solve~\eqref{IntervalLP} and obtain an optimal solution $\{\xp_{j,i},\xp_{S,i}, \cp_S\}$.
  \item Create batches of jobs according to that solution as follows. First, for every job $j$, define $\cp_j \coloneq \sum_{i=1}^L \xp_{j,i}  |I_i|  \gamma_{i-1} / p_j$. Let $\alpha$ be a variable uniformly drawn from $[0,1]$ and $\beta$ be some parameter to be determined later.
  We define a partition of jobs $J_i \coloneq \{j \in J \mid \beta^{i-1+\alpha}<\cp_j\leq \beta^{i+\alpha}\}$ for $i=0,1,\ldots, K$. %
  \item We schedule the batches $J_0, \ldots, J_K$ sequentially.
  That is, we start scheduling $J_{i+1}$ after all jobs in $J_i$ are completed.
  For each batch, we use a problem-specific scheduling subroutine for the following subproblem:
	
  \medskip 
  \textbf{Subproblem:}
  Given a subset $J'$ of jobs, compute a feasible non-preemptive schedule that completes all the jobs in $J'$ within a makespan of at most
  \(
  \rho \cdot \max_{d\in [D]} \sum_{j\in J'}b_{d,j} \cdot p_j
  \)
  for some $\rho \geq 1$.

\end{enumerate}
Similar batching and makespan reductions have been previously used for more specific scheduling problems~\cite{QueyranneS02non-preemptive,DBLP:journals/talg/GandhiHKS08}.
We unify these approaches by comparing the makespan to the abstract lower bound $\max_{d\in [D]} \sum_{j\in J'}b_{d,j} \cdot p_j$ that holds for all DPSP-G instances.
We now explain the intuition behind the framework.

We can interpret the abstract lower bound as follows.
Consider any subset~$J'$ of job and let $d \in [D]$. Let $y_{j,t}$ be the amount of processing of job $j$ receives at time $t$.
Suppose we complete $J'$ by time $T$.
Since each job $j \in J'$ has received $p_j$ units of processing, we have
\(
\int_{0}^T \sum_{j \in J'}b_{d,j}y_{j,t}  \mathrm{d}t = \sum_{j \in J'} b_{d,j} \int_{0}^T y_{j,t} \mathrm{d}t \geq  \sum_{j \in J'} b_{d,j} p_j .
\)
By the polytope constraints, we have $\int_{0}^T \sum_{j \in J'}b_{d,j}y_{j,t}\mathrm{d}t \leq \int_{0}^T 1 \mathrm{d}t = T$.
Hence, $T$ must be at least $\sum_{j\in J'}b_{d,j} \cdot p_j$.

We next give an outline of the analysis of the framework.
We first show that our time-indexed formulation implies inequalities $(1+\varepsilon')\sum_{j \in J'} b_{d,j} p_j \cp_j \geq \frac{1}{2}(\sum_{j \in J'} b_{d,j}p_j)^2$ for all $J' \subseteq J$.
These can be thought of generalizations of Queyranne's subset constraints for single-machine scheduling~\cite{Queyranne93,Schulz96}, abstracting away the machine environment via the polytope.

\begin{restatable}{lemma}{setbound} \label{lem: set_bound}
For any $J' \subseteq J$ and $d \in [D]$, it holds that
\[
(1+\varepsilon')\sum_{j \in J'} b_{d,j} p_j \cp_j \geq \frac{1}{2}\bigg(\sum_{j \in J'} b_{d,j}p_j \bigg)^2 \ .
\]
\end{restatable}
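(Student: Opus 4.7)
\textbf{Proof plan for \Cref{lem: set_bound}.} The plan is to expand the left-hand side in terms of the LP variables $\xp_{j,i}$, aggregate them per interval, and then exploit both the geometric structure of the time discretization ($\gamma_i = (1+\varepsilon')\gamma_{i-1}$) and the polytope constraint~\eqref{IntervalLP:polytope} via a standard ``sum-of-squares'' trick.

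First, I will unfold $\cp_j$ by its definition, so that
\[
\sum_{j \in J'} b_{d,j} p_j \cp_j \;=\; \sum_{j \in J'} b_{d,j} \sum_{i=1}^L \xp_{j,i}\,|I_i|\,\gamma_{i-1} \;=\; \sum_{i=1}^L \gamma_{i-1}\, z_i,
\]
where I introduce the per-interval aggregate $z_i := |I_i| \sum_{j\in J'} b_{d,j}\,\xp_{j,i}$. The polytope constraint~\eqref{IntervalLP:polytope} immediately gives $z_i \le |I_i|$, and by combining constraints~\eqref{IntervalLP:group-finishes} and~\eqref{IntervalLP:prefix-bound} for any group $S \ni j$ (assuming without loss of generality that every job lies in some group, otherwise it may be removed), one obtains $\sum_i \xp_{j,i}\,|I_i| \ge p_j$, which yields
\[
\sum_{i=1}^L z_i \;\ge\; \sum_{j \in J'} b_{d,j}\, p_j.
\]

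Next, I will use the fact that $\gamma_i = (1+\varepsilon')\gamma_{i-1}$, so $(1+\varepsilon')\sum_i \gamma_{i-1} z_i = \sum_i \gamma_i z_i$. Since $\gamma_i \ge \sum_{i'=1}^{i} |I_{i'}| \ge \sum_{i' \le i} z_{i'}$, the right-hand side is lower bounded by $\sum_i z_i \sum_{i' \le i} z_{i'}$. A routine symmetrization gives
\[
\sum_{i=1}^L z_i \sum_{i' \le i} z_{i'} \;=\; \tfrac{1}{2}\Bigl(\bigl(\sum_i z_i\bigr)^2 + \sum_i z_i^2\Bigr) \;\ge\; \tfrac{1}{2}\bigl(\sum_i z_i\bigr)^2,
\]
and chaining with the lower bound on $\sum_i z_i$ derived above yields the claim.

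I do not anticipate a real obstacle here; the lemma is a polyhedral analogue of the well-known mean-busy-time bound, and the only care that is required is (i) tracking the $(1+\varepsilon')$ factor, which arises solely from the geometric interval discretization, and (ii) justifying $\sum_i \xp_{j,i}|I_i| \ge p_j$ via an arbitrary group containing $j$ rather than via an explicit LP constraint on single jobs.
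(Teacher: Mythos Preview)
Your proof is correct and takes a genuinely different route from the paper's. Both proofs start identically: define the per-interval aggregate $z_i = |I_i|\sum_{j\in J'} b_{d,j}\xp_{j,i}$ (the paper calls it $s_i$), observe $z_i \le |I_i|$ from~\eqref{IntervalLP:polytope}, and derive $\sum_i z_i \ge Y := \sum_{j\in J'} b_{d,j} p_j$ from~\eqref{IntervalLP:group-finishes} and~\eqref{IntervalLP:prefix-bound}. From there the paths diverge. The paper argues that $\sum_i z_i \gamma_{i-1}$ is minimized, under the box constraints $0\le z_i\le |I_i|$ and the mass constraint $\sum_i z_i = Y$, by packing mass into the earliest intervals; it then evaluates the resulting geometric sums explicitly and finishes with a quadratic case analysis in the residual $r = Y - \sum_{i\le k}|I_i|$. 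Your argument instead exploits the single identity $(1+\varepsilon')\gamma_{i-1}=\gamma_i$ together with $\gamma_i \ge \sum_{i'\le i}|I_{i'}| \ge \sum_{i'\le i} z_{i'}$ and the symmetrization $\sum_i z_i \sum_{i'\le i} z_{i'} = \tfrac12\bigl((\sum_i z_i)^2 + \sum_i z_i^2\bigr)$, which immediately yields the bound without any extremal analysis or explicit geometric-series computation. Your approach is shorter and more transparent, and it makes clear that the only role of the geometric discretization is to supply the factor $(1+\varepsilon')$ turning $\gamma_{i-1}$ into $\gamma_i$; the paper's approach, on the other hand, identifies the tight configuration explicitly, which can be informative if one later wants to understand when the inequality is close to tight.
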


\begin{proof}
  Recall that $\gamma_{i} = \delta (1+\varepsilon')^i$
  and $|I_i| =\gamma_i -\gamma_{i-1} = \delta \cdot \varepsilon'\cdot(1+\varepsilon')^{i-1}$.
  Observe that
  \begin{align*}
    \sum_{j \in J'} b_{d,j} p_j \cp_j  = \sum_{j \in J'} b_{d,j} p_j \sum_{i=1}^L \frac{\xp_{j,i}}{p_j}|I_i| \gamma_{i-1}
    = \sum_{i=1}^L |I_i| \gamma_{i-1} \sum_{j \in J'}b_{d,j} \xp_{j,i}  \ .
  \end{align*}
  Let $Y:= \sum_{j \in J'} b_{d,j} p_j $ and let $s_i:= |I_i|\sum_{j \in J'}b_{d,j}\xp_{j,i}$.
  Thus, it remains to prove that
  \[
     (1+\varepsilon' )\sum_{i=1}^L s_i \gamma_{i-1} \geq \frac{Y^2}{2} \ .
  \]
  By~\eqref{IntervalLP:polytope}, we have
  $\sum_{j \in J'}b_{d,j} \xp_{j,i} \leq 1$.
  Hence $0 \leq s_i \leq |I_i|$.
  Further, we have
  \[
    \sum_{i=1}^L s_i = \sum_{i=1}^L |I_i|\sum_{j \in J'}b_{d,j}\xp_{j,i}
    =\sum_{j \in J'} b_{d,j} \sum_{i=1}^L |I_i|\xp_{j,i}
    \geq \sum_{j \in J'} b_{d,j} p_j = Y
  \]
  because $\sum_{i=1}^L \xp_{j,i} |I_i| \geq  p_j$, which is implied by \eqref{IntervalLP:group-finishes} and \eqref{IntervalLP:prefix-bound}.

  Let $k$ be the index such that $\sum_{i=1}^k |I_i| \leq Y$ and $\sum_{i=1}^{k+1} |I_i| > Y$.
  Observe that $\sum_{i=1}^L s_i \gamma_{i-1}$ reaches its minimum when $s_i = |I_i|$ for $i\leq k$ and $s_{k+1} = Y-\sum_{i=1}^L s_i < |I_{k+1}|$. Let $r:= s_{k+1}$.
  We have
  \begin{align*}
    \sum_{i=1}^L s_i \gamma_{i-1} &\geq r \cdot \gamma_{k} + \sum_{i=1}^k |I_i|\gamma_{i-1}
    = r \cdot \delta(1+\varepsilon')^k + \sum_{i=1}^k \varepsilon' (1+\varepsilon')^{2i-2}
    \\
    &= r \cdot \delta\cdot (1+\varepsilon')^k + \varepsilon' \delta^2\sum_{i=1}^k (1+\varepsilon')^{2i-2}
    = r \cdot \delta\cdot (1+\varepsilon')^k + \delta^2 \frac{(1+\varepsilon')^{2k}-1}{2+\varepsilon'} \ .
  \end{align*}
  Multiplying $1+\varepsilon'$ and using the fact that $\frac{1+\varepsilon'}{2+\varepsilon'} \geq \frac{1}{2}$, we obtain
  \[
    (1+\varepsilon')\sum_{i=1}^L s_i \gamma_{i-1} \geq \delta \cdot r \cdot (1+\varepsilon')^{k+1} + \delta^2 \frac{(1+\varepsilon')^{2k}-1}{2} \ .
  \]
  Now we consider $Y^2$, which is equal to twice the RHS of the claimed inequality.
  \begin{align*}
    Y^2
    &=  \bigg(r +\sum_{i=1}^k |I_i| \bigg)^2
    =  \left(r + \delta ((1+\varepsilon')^k-1) \right)^2 \\
    &= r^2+ 2 r \delta \left((1+\varepsilon')^k-1\right) + \delta^2 \left((1+\varepsilon')^k-1\right)^2
    \\
    &= r^2 + 2\delta\left((1+\varepsilon')^k-1\right) r + \delta^2(1+\varepsilon')^{2k}-2\delta^2(1+\varepsilon')^k+\delta^2\ .
  \end{align*}
  We now consider $2 \cdot (\text{RHS}-\text{LHS})$:
  \begin{align*}
    &Y^2 - 2(1+\varepsilon')\sum_{i=1}^L s_i \gamma_{i-1} \\
    &\leq  r^2+ 2\delta\left((1+\varepsilon')^k-1-(1+\varepsilon')^{k+1}\right)r -2\delta^2(1+\varepsilon')^k+2\delta^2 \\
    &\leq   r^2- 2\delta \varepsilon' (1+\varepsilon')^k r -2\delta^2(1+\varepsilon')^k+2\delta^2
    \ .
  \end{align*}
  Consider the last summation as a quadratic function of $r$. It is decreasing whenever $r \in [0, \delta \varepsilon' (1+\varepsilon')^k ] \leq |I_{k+1}|$.
  Thus, if $r \leq |I_{k+1}|$, it reaches its maximum at $r=0$, which is $2\delta^2 (1-(1+\varepsilon')^k) < 0$.
  This implies $Y^2 - 2(1+\varepsilon')\sum_{i=1}^L s_i \gamma_{i-1} < 0$, and thus, the lemma.
  \end{proof}

Using \Cref{lem: set_bound}, we can give a bound on the makespan of batch $J_i$.

\begin{lemma} \label{lem: batch_makespan}
For each $i \in \{0,\ldots,K\}$, if we start scheduling batch $i$ at time $0$
with the subroutine, we complete all %
jobs in $J_i$ by time $2 (1+\varepsilon')\rho \beta^{i+\alpha}$.
\end{lemma}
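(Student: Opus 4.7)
The plan is to combine the guarantee of the subroutine with \Cref{lem: set_bound} applied to the batch $J_i$. By the subroutine's promise, if we start scheduling $J_i$ at time $0$, the completion time of the batch is at most $\rho \cdot \max_{d \in [D]} \sum_{j \in J_i} b_{d,j} p_j$. Hence it suffices to show that for every $d \in [D]$,
\[
\sum_{j \in J_i} b_{d,j} p_j \;\leq\; 2(1+\varepsilon')\,\beta^{i+\alpha}.
\]

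To prove this bound, I would apply \Cref{lem: set_bound} with $J' = J_i$, which yields
\[
(1+\varepsilon')\sum_{j \in J_i} b_{d,j} p_j \cp_j \;\geq\; \tfrac{1}{2}\Bigl(\sum_{j \in J_i} b_{d,j} p_j\Bigr)^{2}.
\]
By the definition of the batch $J_i = \{j : \beta^{i-1+\alpha} < \cp_j \leq \beta^{i+\alpha}\}$, every $j \in J_i$ satisfies $\cp_j \leq \beta^{i+\alpha}$, so the left-hand side is upper-bounded by $(1+\varepsilon')\beta^{i+\alpha} \sum_{j \in J_i} b_{d,j} p_j$. Dividing both sides by $\sum_{j \in J_i} b_{d,j} p_j$ (the case where this sum is zero being trivial) gives $\sum_{j \in J_i} b_{d,j} p_j \leq 2(1+\varepsilon')\beta^{i+\alpha}$, as required. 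Taking the maximum over $d$ and multiplying by $\rho$ gives the claimed bound on the makespan of the batch.

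There is essentially no obstacle here beyond identifying the right lemma: the statement is a direct consequence of \Cref{lem: set_bound} and the batching rule. The only thing worth flagging is that the bound only uses the upper endpoint of the interval defining $J_i$; the lower endpoint $\beta^{i-1+\alpha}$ is not needed here but will presumably be used later when summing the batch makespans against $\sum_S w_S \cp_S$ to bound the total objective.
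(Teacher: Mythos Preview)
Your proposal is correct and follows essentially the same argument as the paper: apply \Cref{lem: set_bound} to $J_i$, use $\cp_j \leq \beta^{i+\alpha}$ for $j \in J_i$ to bound the left-hand side, divide through, and invoke the subroutine's guarantee. There is nothing to add.
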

\begin{proof}
  By \Cref{lem: set_bound}, for any $d \in [D]$ we have
\[
\frac{1}{2}\bigg(\sum_{j \in J_i} b_{d,j}p_j \bigg)^2 \leq (1+\varepsilon') \sum_{j \in J_i} b_{d,j} p_j \cp_j \leq (1+\varepsilon')\beta^{i+\alpha} \sum_{j \in J_i} b_{d,j} p_j \ ,
\]
which implies $\sum_{j \in J_i} b_{d,j} p_j \leq 2 (1+\varepsilon') \beta^{i+\alpha}$.
Hence, the subroutine will output a schedule that completes all the job in $J_i$ within time
\[
2(1+\varepsilon')\rho \cdot \max_{d\in [D]} \sum_{j\in J_i}b_{d,j}p_j \leq 2(1+\varepsilon')\rho \beta^{i+\alpha} \ ,
\]
which concludes with the lemma.
\end{proof}

Since we schedule the batches one after another, to bound the completion time of a single job $j\in J_i$, we need to sum up the completion times of $J_0, \ldots, J_{i-1}$.

\begin{lemma}
  For each $i \in \{0,\ldots,K\}$ and for each job $j \in J_i$, we have $C_j \leq 2(1+\varepsilon')\rho \frac{\beta^{i+1+\alpha}}{\beta-1}$.
\end{lemma}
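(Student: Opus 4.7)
The plan is a direct geometric-sum argument. Since the batches $J_0, J_1, \ldots, J_K$ are scheduled sequentially, the completion time of any job $j \in J_i$ is at most the sum of the makespans of batches $J_0$ through $J_i$. Each individual batch makespan is already controlled by the previous lemma, so all that remains is to sum up a geometric progression with ratio $\beta$.

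Concretely, I would invoke \Cref{lem: batch_makespan} once per batch to get that batch $J_k$ finishes within time $2(1+\varepsilon')\rho\beta^{k+\alpha}$ after it starts. Summing over $k = 0, 1, \ldots, i$ yields
\begin{equation*}
C_j \;\leq\; \sum_{k=0}^{i} 2(1+\varepsilon')\rho \beta^{k+\alpha} \;=\; 2(1+\varepsilon')\rho \beta^{\alpha} \cdot \frac{\beta^{i+1} - 1}{\beta - 1} \;\leq\; 2(1+\varepsilon')\rho \cdot \frac{\beta^{i+1+\alpha}}{\beta - 1},
\end{equation*}
which is exactly the claimed bound.

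There is no real obstacle here: the lemma is essentially a routine telescoping of the per-batch bound, and the factor $\beta/(\beta-1)$ emerging from the geometric series is precisely what is being tracked in the statement. The only minor point worth noting is that \Cref{lem: batch_makespan} is stated as if the batch were started at time $0$; this is valid to use here because we are upper bounding each batch's duration by the same quantity regardless of its start time, which is legitimate since the subroutine's guarantee depends only on the set of jobs in the batch and not on when it begins execution.
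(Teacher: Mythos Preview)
Your proposal is correct and matches the paper's proof essentially line for line: both sum the per-batch makespan bounds from \Cref{lem: batch_makespan} over $k=0,\ldots,i$ and evaluate the resulting geometric series to obtain $2(1+\varepsilon')\rho\,\beta^{i+1+\alpha}/(\beta-1)$. Your remark about the subroutine's guarantee being independent of the batch's start time is a nice clarifying observation that the paper leaves implicit.
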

\begin{proof}
We bound $C_j$ by the sum of completion time of the batches $J_0, \dots, J_i$. Hence,
\[
    C_j\leq \sum_{j=0}^i 2 (1+\varepsilon')\rho \beta^{j+\alpha} = 2 (1+\varepsilon')\rho\beta^\alpha \frac{\beta^{i+1}-1}{\beta-1} \leq 2 (1+\varepsilon')\rho\frac{\beta^{i+1+\alpha}}{\beta-1} \ ,
\]
which completes the proof of the lemma.
\end{proof}

\begin{restatable}{lemma}{jobLPcompletion}
  For any $S \in \ms$ and any $j \in S$, it holds that $\cp_j \leq \cp_S$.
\end{restatable}

\begin{proof}
  We rewrite $\cp_S$ as follows.
  \begin{align*}
    \cp_S &= \sum_{i=1}^L \xp_{S,i} \gamma_{i-1}  =
    \sum_{i=1}^L \xp_{S,i} \int_0^{\gamma_{i-1}} 1 \; \mathrm{d}t
    = \int_0^{\infty} \sum_{1 \leq i \leq L, \gamma_{i-1} \geq t} \xp_{S,i} \;\mathrm{d}t \ .
  \end{align*}
  Using \eqref{IntervalLP:group-finishes}, the above is equal to
  \begin{align*}
    &= \int_0^{\infty}  \bigg(1-\sum_{1 \leq i \leq L, \gamma_{i-1} \leq t} \xp_{S,i} \bigg) \;\mathrm{d}t
    \geq \int_0^{\infty} \bigg(1-\sum_{1 \leq i \leq L, \gamma_{i-1} \leq t} \frac{\xp_{j,i}}{p_j}|I_i| \bigg)\;\mathrm{d}t   \\
    &= \int_0^{\infty} \sum_{1 \leq i \leq L, \gamma_{i-1} \geq t} \frac{\xp_{j,i}}{p_j}|I_i| \;\mathrm{d}t
    = \sum_{i=1}^L \frac{\xp_{j,i}}{p_j}|I_i| \int_0^{\gamma_{i-1}} 1 \; \mathrm{d}t \\
    &= \sum_{i=1}^L \frac{\xp_{j,i}}{p_j}|I_i| \gamma_{i-1} = \cp_j
    \ .
  \end{align*}
  The inequality
   follows from \eqref{IntervalLP:prefix-bound}.
\end{proof}

Fix any $S \in \ms$ with $\beta^{i-1+\alpha}<\cp_S\leq \beta^{i+\alpha}$ for some $i$.
For any job $j \in S$, since $\cp_j \leq \cp_S \leq \beta^{i+\alpha}$, job $j$ belongs to some batch $\ell \leq i$.
Hence, $C_S = \max_{j \in S} C_j \leq 2(1+\varepsilon')\rho\frac{\beta^{i+1+\alpha}}{\beta-1}$.

\begin{corollary} \label{cor: cs}
  For any $i \geq 0$ and for any group $S$ with $\beta^{i-1+\alpha}<\cp_S\leq \beta^{i+\alpha}$, the completion time of $S$, $C_S$, is at most $2(1+\varepsilon')\rho\frac{\beta^{i+1+\alpha}}{\beta-1}$.
\end{corollary}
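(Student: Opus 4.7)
The plan is to observe that this corollary is essentially an immediate consequence of the preceding per-job completion-time lemma combined with the earlier fact $\cp_S \geq \cp_j$ for all $j \in S$. In fact, the paragraph right before the corollary already outlines the argument, so the proof proposal is really just a formalization of those two sentences.

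First, I would fix a group $S \in \ms$ in the range $\beta^{i-1+\alpha} < \cp_S \leq \beta^{i+\alpha}$. For each $j \in S$, I invoke the earlier lemma \textbf{(jobLPcompletion)} to conclude $\cp_j \leq \cp_S \leq \beta^{i+\alpha}$. By the definition of the batches $J_\ell = \{j : \beta^{\ell-1+\alpha} < \cp_j \leq \beta^{\ell+\alpha}\}$, this forces $j \in J_\ell$ for some index $\ell \leq i$; said differently, no job of $S$ lives in a later batch than $J_i$.

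Next, I apply the previous lemma (the one bounding $C_j$ for $j \in J_\ell$) to get $C_j \leq 2(1+\varepsilon')\rho \, \beta^{\ell+1+\alpha}/(\beta-1)$, which is monotone in $\ell$ and hence at most $2(1+\varepsilon')\rho \, \beta^{i+1+\alpha}/(\beta-1)$ since $\ell \leq i$. Finally, since $C_S = \max_{j \in S} C_j$ and each $j \in S$ satisfies this same bound, I take the maximum and obtain
\[
  C_S \leq 2(1+\varepsilon')\rho \, \frac{\beta^{i+1+\alpha}}{\beta-1},
\]
which is exactly the claimed inequality.

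There is no real obstacle here: once the relation $\cp_j \leq \cp_S$ has been established and the batch-completion lemma is in hand, the corollary is a one-line deduction from monotonicity in $\ell$ and the definition of the group completion time. The only thing to be slightly careful about is the corner case where $\cp_j$ is strictly smaller than $\cp_S$, placing $j$ in a batch of index $\ell < i$; this case only strengthens the bound, so taking the max over $S$ still yields the stated estimate.
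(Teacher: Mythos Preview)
Your proposal is correct and matches the paper's approach exactly: the paper itself derives the corollary in the sentence immediately preceding it, using precisely the same chain $\cp_j \leq \cp_S \leq \beta^{i+\alpha}$ to place each $j\in S$ in a batch $J_\ell$ with $\ell \leq i$, and then invoking the per-job completion-time bound. There is nothing to add or correct.
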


Recall that $\alpha$ is a random variable.
Next, we bound the expected performance of the~algorithm.

\begin{restatable}{lemma}{expectedCS}
  For any group $S \in \ms$, $\EX[C_S] \leq \frac{2(1+\varepsilon')\rho\beta}{\ln \beta} \cp_S$.
\end{restatable}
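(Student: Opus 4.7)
The plan is to use \Cref{cor: cs} to upper bound $C_S$ by a deterministic function of the batch index of $S$, and then evaluate the expectation by a change of variables that exploits the uniformity of $\alpha$. Concretely, I would fix a group $S \in \ms$, set $L := \log_\beta \cp_S$, and let $i(\alpha)$ be the unique non-negative integer such that $\beta^{i(\alpha)-1+\alpha} < \cp_S \leq \beta^{i(\alpha)+\alpha}$. By \Cref{cor: cs}, we have the pointwise bound
\[
C_S \;\leq\; \frac{2(1+\varepsilon')\rho}{\beta-1}\,\beta^{i(\alpha)+1+\alpha}.
\]
Hence it suffices to show $\EX\bigl[\beta^{i(\alpha)+\alpha}\bigr] = \cp_S(\beta-1)/\ln\beta$ so the $\beta-1$ factors cancel and a factor of $\beta$ remains.

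The key step is to argue that $u := i(\alpha) + \alpha$ is uniformly distributed on $[L, L+1)$. I would do a direct case analysis: writing $L = n + \theta$ with $n \in \mathbb{Z}_{\geq 0}$ and $\theta \in [0,1)$, the definition of $i(\alpha)$ gives $i(\alpha) = n+1$ when $\alpha \in [0,\theta)$ (so $u = n+1+\alpha$ covers $[n+1,\,n+1+\theta)$) and $i(\alpha) = n$ when $\alpha \in [\theta,1)$ (so $u = n+\alpha$ covers $[n+\theta,\,n+1)$). The map $\alpha \mapsto u$ is a measure-preserving bijection from $[0,1)$ onto $[L,L+1)$, since $du = d\alpha$ on each piece.

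With this change of variables in hand, the computation is straightforward:
\[
\EX\bigl[\beta^{i(\alpha)+\alpha}\bigr] \;=\; \int_0^1 \beta^{i(\alpha)+\alpha}\,d\alpha \;=\; \int_L^{L+1} \beta^u\,du \;=\; \frac{\beta^{L+1} - \beta^L}{\ln \beta} \;=\; \frac{\cp_S(\beta-1)}{\ln \beta}.
\]
Plugging back into the pointwise bound on $C_S$ and taking expectations yields
\[
\EX[C_S] \;\leq\; \frac{2(1+\varepsilon')\rho}{\beta-1}\cdot \beta \cdot \EX\bigl[\beta^{i(\alpha)+\alpha}\bigr] \;=\; \frac{2(1+\varepsilon')\rho\,\beta}{\ln\beta}\,\cp_S,
\]
which is exactly the claimed bound.

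The only subtle point is the measure-preservation argument for $u$; everything else is a direct substitution. I do not anticipate any real obstacle, but I would make sure to handle the boundary case where $L$ itself is an integer (then $\theta = 0$ and the first subinterval is empty, while $u = n + \alpha$ covers $[L, L+1)$ directly), so the bijection argument holds in all cases.
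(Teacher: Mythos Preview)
Your proof is correct and takes essentially the same approach as the paper. The paper writes $\cp_S = \beta^{i-1+\gamma}$ with $\gamma \in (0,1]$, splits into the cases $\alpha < \gamma$ and $\alpha \geq \gamma$, and observes that the resulting bound on $C_S/\cp_S$ is $2(1+\varepsilon')\rho\,\beta^{1+x}/(\beta-1)$ with $x$ uniform on $[0,1]$; your variable $u = i(\alpha)+\alpha$ is just $x + L$, so the two arguments are the same change of variables in different notation.
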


\begin{proof}
  We rewrite $\cp_S = \beta^{i-1+\gamma}$ for some $i \geq 0, \gamma \in (0,1]$.
  By \Cref{cor: cs},
  when $\alpha < \gamma$, we have $\beta^{i-1+\alpha}<\cp_S\leq \beta^{i+\alpha}$.
  Hence,
  \[
    \frac{C_S}{\cp_S} \leq 2(1+\varepsilon')\rho\frac{\beta^{i+1+\alpha}}{\beta-1} \cdot \frac{1}{\beta^{i-1+\gamma}} = 2(1+\varepsilon')\rho\frac{\beta^{2+\alpha-\gamma}}{\beta-1} \ .
  \]
When $\gamma \leq \alpha < 1$, we have  $\beta^{(i-1)-1+\alpha}<\cp_S\leq \beta^{(i-1)+\alpha}$.
Hence,
\[
  \frac{C_S}{\cp_S} \leq 2(1+\varepsilon')\rho\frac{\beta^{i+\alpha}}{\beta-1} \cdot \frac{1}{\beta^{i-1+\gamma}} = 2(1+\varepsilon')\rho\frac{\beta^{1+\alpha-\gamma}}{\beta-1} \ .
\]
In the first case, $2+\alpha-\gamma \in [2-\gamma, 2]$ and in the second $1+\alpha-\gamma \in [1, 2-\gamma]$.
Thus, we can
rewrite the bound as $\frac{C_s}{\cp_S} \leq 2(1+\varepsilon')\rho\frac{1+x}{\beta-1}$, where $x$ is a uniform random variable in $[0,1]$. Hence, we have
\[
\EX \bigg[\frac{C_S}{\cp_S} \bigg]
\leq 2(1+\varepsilon')\rho\int_0^1
 \frac{\beta^{1+x}}{\beta-1} \; \mathrm{d}x
 = \frac{2(1+\varepsilon')\rho\beta}{\ln \beta},
\]
which implies the lemma.
\end{proof}

By optimizing the choice of $\beta$ (setting $\beta:= e$), we
can bound the approximation ratio with $2e(1+\varepsilon')$ against the optimal solution of~\eqref{IntervalLP}, and thus, an overall approximation ratio of $2e(1+\varepsilon')(1+\delta) = 2e + \varepsilon$ for suitable $\varepsilon'$ and $\delta$.

Above we only argued about the case where all jobs are available at time $0$.
If jobs have release times $r_j$, we need some modification since we cannot start scheduling a job before its release time.
First, we add the constraints $x_{j,t} = 0$ for each job $j$ and $t < r_j$ to the LP, which implies $r_j \leq \cp_j $.
Second, we force the completion time of the batch $J_i$ to be exactly the upper bound $2\rho(1+\varepsilon') \beta^{i+\alpha}$ (cf.\ \Cref{lem: batch_makespan}) by adding idle time.
Hence we only start the batch $J_i$ after time $\sum_{\ell:0\leq \ell<i}  2\rho(1+\varepsilon') \beta^{\ell+\alpha}\leq 2\rho(1+\varepsilon')\frac{\beta^{i+\alpha}}{\beta-1}$.
Since job $j$ belongs to batch $J_i$, we have $r_j \leq \cp_j \leq \beta^{i+\alpha}$.
Therefore, as long as $r_j \leq \beta^{i+\alpha} \leq  2\rho(1+\varepsilon')\frac{\beta^{i+\alpha}}{\beta-1}$, or equivalently $\beta \leq 2\rho(1+\varepsilon') + 1$, we would schedule $j$ only after its release time $r_j$, which always holds when $\beta=e$.
We conclude with the following.
\dpspMakespan*

\subsection{Applications to %
Machine Scheduling} \label{sec:dpsp-machine-scheduling}

We show how to apply our framework to parallel-machine scheduling with group completion times. 
While the identical-machine case mainly serves to illustrate our framework, we improve upon previous results for related parallel machines.
\medskip 
\noindent \textbf{Identical Machines.}
We are given $m$ parallel identical machines, each with a speed of~$1$.
At any time $t$, a schedule must satisfy $y_{j,t} \leq 1$ for any job $j$ and
$\sum_{j}\frac{y_{j,t}}{m} \leq 1$.
These constraints form a feasible polytope for PSP. %
Thus, %
we have
\(
\max_{d\in [D]} \sum_{j\in J'}b_{d,j}p_j \geq \max \{\max_{j \in J'} p_j, \frac1m \sum_{j\in J'}p_j \}.
\)
To apply the framework, we can use List Scheduling~\cite{DBLP:journals/siamam/Graham69} to compute non-preemptive schedule for any job set $J'$ with makespan at most
\(
\rho \cdot \max \{\max_{j \in J'} p_j, \frac{1}{m}\sum_{j\in J'}p_j \}
\)
for $\rho = 2$.
Hence, \Cref{thm: framework} implies a $(4e + \varepsilon)$-approximation algorithm for minimizing the total weighted group completion time on parallel identical machines. We note that a $2$-approximation is known for this problem~\cite{leung2006approximation,DBLP:journals/scheduling/YangP05}.

\medskip 
\noindent \textbf{Uniformly Related Machines.}
We are given $m$ parallel machines with speeds $s_1 \geq s_2 \geq \ldots \geq s_m$ and assume w.l.o.g.\ that $m=n$. %
At any time $t$, a feasible schedule satisfies for all $\{j_1, j_2\ldots, j_{\ell} \} \subseteq J$ that
$\sum_{q = 1}^{\ell} y_{j_q,t} \leq \sum_{i =1}^{\ell} s_i$
\cite{FeldmanMNP08,DBLP:journals/jacm/ImKM18}:
These are the packing constraints of our polytope $\mathcal P$. They can be separated in polynomial time by enumerating $\ell$ and checking the first $\ell$ jobs with largest $y_{j,t}$.
Let $J'=\{j_1, j_2\ldots, j_k \} \subseteq J$ be any subset of jobs with $p_{j_1} \geq \ldots \geq p_{j_k}$.
We have  %
\begin{align*}
  \max_{d\in [D]} \sum_{j\in J'}b_{d,j}p_j =
  \max_{d\in [D]} \sum_{q=1}^k b_{d,j_q}p_{j_q}
  \geq \max_{1 \leq \ell \leq k} \sum_{q=1}^{\ell} p_{j_q} \cdot \bigg(\sum_{i=1}^{\ell} s_i \bigg)^{-1} \ .
\end{align*}
The last inequality uses the packing constraints on the first $\ell$ jobs in $J'$ for $1 \leq \ell \leq k$.
The polynomial-time algorithm in \cite{DBLP:journals/jacm/HorvathLS77} computes a \emph{preemptive} schedule with a makespan of at most $\max_{1 \leq \ell \leq k} \sum_{j=1}^{\ell} p_j / \sum_{i=1}^{\ell}s_i$ for any $J'=\{j_1, j_2\ldots, j_k \} \subseteq J$.
Combined with a polynomial-time conversion from a \emph{preemptive} schedule to a \emph{non-preemptive} schedule, which increases the makespan by a factor of at most $(2-\frac{1}{m})$~\cite{DBLP:journals/orl/Woeginger00a}, 
we can compute in polynomial time a non-preemptive schedule with makespan
\[
     2 \cdot T \leq 2 \cdot \max_{1 \leq \ell \leq k} \sum_{q=1}^{\ell} p_{j_q} \cdot \bigg(\sum_{i=1}^{\ell} s_i \bigg)^{-1} \leq 2 \cdot \max_{d\in [D]} \sum_{j\in J'}b_{d,j}p_j \ ,
\]
which yields the desired subroutine.
Hence, \Cref{thm: framework} implies a $4e + \varepsilon \leq 10.874$-approximation for
\abc{Q}{r_j}{\sum w_SC_S}, which improves upon the known $13.5$-approximation for unrelated machines~\cite{DBLP:journals/mor/CorreaSV12}.
\begin{theorem}
  There is %
  a $10.874$-approximation algorithm for \abc{Q}{r_j}{\sum w_SC_S}.
\end{theorem}

\subsection{Applications to %
Sum Multicoloring}\label{sec:dpsp-graph-scheduling}

Next, we consider the non-preemptive sum multicoloring problem with group completion times (npSMC-G), which generalizes a collection of coloring and scheduling problems, e.g, non-preemptive sum multicoloring, data migration, and coflow scheduling.
We first model feasible rates as a polytope.
Let $\mc$ be the family of cliques of $G$.
At any time, the allocation of rates to jobs should satisfy for all cliques $C \in \mc$ that
$\sum_{v\in C}y_{v,t} \leq 1$.
Given a subset of jobs $V' \subseteq V$, let $G[V']$ be the subgraph induced by $V'$ and $\mc[V']$ be the family of cliques in $G[V']$.
By expressing the above constraints as~\eqref{IntervalLP:polytope},
we derive
\[
\max_{d\in [D]} \sum_{j\in J'}b_{d,j}p_j \geq \max_{C \in \mc} \sum_{v \in C \cap V'} p_v = \max_{C \in \mc[V']} \sum_{v \in C} p_v\ .
\]
The last term is the length of the largest clique in $G[V']$.
To apply our algorithmic framework, the key is to find an algorithm that computes a schedule for any given job set $V' \subseteq V$ with a makespan of $\rho \cdot \max_{C \in \mc[V']} \sum_{v \in C} p_v$
for some $\rho \geq 1$.
By plugging in different algorithms for the subproblem, we obtain approximation algorithms for various graph classes as follows.

\medskip 
\noindent \textbf{Interval Graphs.}
  Buchsbaum et al.~\cite{DBLP:journals/siamcomp/BuchsbaumKKRT04} considered the makespan minimization variant of multicoloring and obtained a $(2+\varepsilon)$-approximation against the lower bound of the largest clique size, which implies an algorithm with $\rho=2+\varepsilon$.
  Further, if jobs' processing times are all in $\Theta(1)$, they obtained a PTAS.
  Applying \Cref{thm: framework}, 
  we obtain the following result.

  \begin{restatable}{theorem}{thmIntervalGraphs}
      There is a $10.874$-approximation algorithm for npSMC-G on interval graphs.
      If jobs' processing times are all in $\Theta(1)$, the approximation ratio can be improved to $5.437$.
  \end{restatable}

  \medskip 
\noindent \textbf{Perfect Graphs and Unit Processing Times.}
  In perfect graphs, we can always find a vertex coloring in polynomial time (cf.\ Corollary 67.2c~\cite{schrijver2003combinatorial}) whose number of colors is exactly the size of the largest clique.
  This implies a makespan minimization algorithm with $\rho = 1$.
  Applying \Cref{thm: framework}, we obtain the following, which improves the $10.874$-approximation in~\cite{DBLP:conf/swat/DarbouyF24}.

  \begin{restatable}{theorem}{thmPerfectGraphs}\label{thm: perfect-graph}
    There is a $5.437$-approximation algorithm for SC-G on perfect graphs.
  \end{restatable}

  \medskip 
\noindent \textbf{Line Graphs.} For simplicity, we flip the roles of vertices and edges.
We say an edge is a job and no two incident edges of the same vertex can be scheduled at the same time.
Hence, the length of the largest clique turns out to be the largest total incident edge lengths, i.e., $\max_{v \in V}\sum_{e\in \delta(v)} p_e$.
We remark that this is equivalent to graph scheduling.

We consider the following greedy algorithm, which turns out to be a makespan minimization algorithm with $\rho =2$:
Iterate through time $t$ and greedily schedule a job (edge) $e=(u,v)$ in the time interval $[t,t+p_e]$ whenever none of other incident edges of $u$ or $v$ is being scheduled at time $t$ (consider the jobs in an arbitrary order).
For the analysis, consider any edge $e=(u,v) \in E$.
Before the start time $t$ of $e$, either some incident edge of $u$ or some incident edge of $v$ is scheduled as otherwise, we would have scheduled $e$ earlier.
Hence, $C_e = t + p_e \leq p(\delta(u) \setminus \{e\})+p(\delta(v) \setminus \{e\}) + p_e \leq 2 \max_{v \in V}\sum_{e\in \delta(v)} p_e$.

Applying \Cref{thm: framework}, we obtain the following, which also applies to the two special cases, i.e., non-preemptive data migration and coflow scheduling with arbitrary processing times.

\begin{restatable}{theorem}{thmLineGraphs}\label{thm: line-graph}
There is a $10.874$-approximation algorithm for graph scheduling (or equivalently, npSMC-G on line graphs).
\end{restatable}

We remark that for the above graph classes, the constraints $\sum_{v\in C}y_{v,t} \leq 1$ for all $C \in \mc$ can be separated in polynomial time (cf.\ Theorem 67.6 in~\cite{schrijver2003combinatorial} for interval graphs and perfect graphs and~\cite{DBLP:journals/jal/Kim05} for line graphs).

\section{A $(2+\varepsilon)$-approximation for offline PSP-G} \label{sec:preemptive}

Finally, we give a randomized $(2+\varepsilon)$-approximation for offline PSP-G for any $\varepsilon > 0$. %
The idea originates from algorithms for PSP and the open shop problem~\cite{JLM25,DBLP:journals/jal/QueyranneS02,SchulzS97}.

Let $\delta>0$ and $\varepsilon' >0$.
We first compute an optimal solution $\{\xp_{j,i},\xp_{S,i}, \cp_S\}$ to~\eqref{IntervalLP}, which is a relaxation of PSP-G.
As discussed in \Cref{sec: DPSP}, we have $\sum_{S \in \ms}w_S\cp_S \leq (1+\delta)\opt$.
Note that the solution $\{\xp_{j,i},\xp_{S,i}, \cp_S\}$ implies a feasible preemptive schedule, which we call the LP schedule.
We will derive another schedule from the LP schedule whose objective value is at most $2(1+\varepsilon')\sum_{S\in \ms} w_S \cp_S \leq (2+\varepsilon')(1+\delta)\opt$.

Let $\alpha$ be a random real variable drawn from the interval $(0,1)$ with the density function $f(\theta)=2\theta$.
We obtain a new schedule by slowing down the LP schedule by a factor of $\frac{1}{\alpha}$.
That is, we map and stretch the allocation of the LP schedule in time interval $[t,t+1]$ to time interval $[\frac{t}{\alpha},\frac{t+1}{\alpha}]$.
Now each job $j$ receives $\frac{p_j}{\alpha}$ amount of scheduling in the stretched schedule.
Then we modify the rate of every job $j$ to $0$ after the time when $j$ has been completed.
Clearly, the final schedule is feasible.
The advantage of this schedule is that we can bound the (expected) completion time of any job $j$ (resp.\ group $S$) against $\cp_j$ (resp.~$\cp_S$).

We introduce some notation for the analysis.
For any job $j$, let $C_j^\alpha$ be the earliest time by which an $\alpha$-fraction of $j$ is completed in the LP schedule.
Fix some $\alpha \in (0, 1)$ and let $C_j$ (resp. $C_S$) be the completion time of $j$ (resp. $S$) in the final schedule of our algorithm. We first bound $C_j$ against the $\alpha$-point of $j$ in the LP schedule.

\begin{restatable}{lemma}{preemptiveJobTime}
For every $j \in J$, it holds $C_j \leq C_j^\alpha / \alpha$.
\end{restatable}
\begin{proof}
  In the LP schedule, an $\alpha$-fraction of $j$ has been done by the time $C_j^\alpha$.
  Since we slow down the LP schedule by a factor of $\frac{1}{\alpha}$, $j$ is completed by time $C_j^\alpha / \alpha$ in the final schedule.  
\end{proof}

Next, we bound the completion time of a group $S \in \ms$ in the final schedule.
We do this analogously to $C_j^\alpha$.
Consider the earliest interval $i$ by which an $\alpha$-fraction of $S$ has been completed, i.e., $\sum_{i'=1}^i x_{S,i'} \geq \alpha \text{ and } \sum_{i'=1}^{i-1} x_{S,i'} < \alpha$.
Let $C_S^\alpha$ be the leftmost point of the $i$th interval, i.e., $\gamma_{i-1}$.
We obtain the following.
\begin{restatable}{lemma}{preemptiveGroupTime}
  For any $S \in \ms$, $C_S = \max_{j \in S} C_j \leq (1+\varepsilon')C_S^\alpha /\alpha$.
\end{restatable}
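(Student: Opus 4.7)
The plan is to reduce the bound on $C_S$ to a per-job bound and then combine the already-established job-level inequality $C_j \leq C_j^\alpha/\alpha$ with the LP's group-prefix constraint~\eqref{IntervalLP:prefix-bound}. Since $C_S = \max_{j\in S} C_j$, it suffices to show that for every $j \in S$, $C_j \leq (1+\varepsilon')\,C_S^\alpha/\alpha$, and by the previous lemma this will follow from the purely LP-level inequality $C_j^\alpha \leq (1+\varepsilon')\,C_S^\alpha$.

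To establish that LP-level inequality, I would fix $j \in S$ and let $i$ be the interval index defining $C_S^\alpha$, i.e.\ the smallest $i$ with $\sum_{i' = 1}^{i} \xp_{S,i'} \geq \alpha$, so that $C_S^\alpha = \gamma_{i-1}$. The key step is to apply the group prefix constraint~\eqref{IntervalLP:prefix-bound} at this very index $i$ to $j$: this yields
\[
\alpha \;\leq\; \sum_{i'=1}^{i} \xp_{S,i'} \;\leq\; \sum_{i'=1}^{i} \frac{\xp_{j,i'}}{p_j}\,|I_{i'}|,
\]
which states exactly that at least an $\alpha$-fraction of job $j$ has been processed in the LP schedule by the right end of interval $i$. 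Consequently $C_j^\alpha \leq \gamma_i$. Using $\gamma_i = (1+\varepsilon')\,\gamma_{i-1} = (1+\varepsilon')\,C_S^\alpha$ gives $C_j^\alpha \leq (1+\varepsilon')\,C_S^\alpha$ as desired.

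Combining the two steps, for every $j \in S$ we obtain
\[
C_j \;\leq\; \frac{C_j^\alpha}{\alpha} \;\leq\; \frac{(1+\varepsilon')\,C_S^\alpha}{\alpha},
\]
and taking the maximum over $j \in S$ yields the claimed bound on $C_S$.

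The only subtlety, and where I would be most careful, is the factor $1+\varepsilon'$: it is the unavoidable discretization loss from measuring $C_S^\alpha$ at the \emph{left} endpoint $\gamma_{i-1}$ of the critical interval while the fractional processing of $j$ is only guaranteed to be completed by its \emph{right} endpoint $\gamma_i = (1+\varepsilon')\gamma_{i-1}$. Everything else is a direct unfolding of the definitions of $C_S^\alpha$, $C_j^\alpha$, and the LP constraints, so I do not anticipate further obstacles.
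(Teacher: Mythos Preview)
Your proof is correct and follows essentially the same argument as the paper: identify the critical interval $i$ for $C_S^\alpha$, apply~\eqref{IntervalLP:prefix-bound} at that index to conclude $C_j^\alpha \leq \gamma_i = (1+\varepsilon')C_S^\alpha$, and then combine with the job-level bound $C_j \leq C_j^\alpha/\alpha$. Your remark about the $(1+\varepsilon')$ loss arising from the left-versus-right endpoint discrepancy is exactly the source of that factor in the paper as well.
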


\begin{proof}
  Consider the earliest interval $i$ by which an $\alpha$-fraction of $S$ has been completed, i.e.,
  \[
  \sum_{i'=1}^i \xp_{S,i'} \geq \alpha \text{ and } \sum_{i'=1}^{i-1} \xp_{S,i'} < \alpha \ .
  \]
  By~\eqref{IntervalLP:prefix-bound}, for any job $j \in S$,
  \[
    \sum_{i' =1}^{i} \xp_{S,i'} \leq \sum_{i'=i}^{i} \frac{\xp_{j,i'}}{p_j}|I_{i'}| \ ,
  \]
  which implies that an $\alpha$-fraction of $j$ has been done at time $\gamma_i$ (the end of $I_i$) and $C_j^\alpha \leq \gamma_i$.
  Hence, we have
  \[
    C_j^\alpha \leq \gamma_i = (1+\varepsilon')\gamma_{i-1} = (1+\varepsilon')C_S^\alpha \ .
  \]
  Further, we have
  \[
     C_S = \max_{j \in S} C_j \leq \max_{j \in S} \frac{C_j^\alpha}{\alpha} \leq (1+\varepsilon') \frac{C_S^\alpha}{\alpha} \ ,
  \]
  which concludes the lemma.
  \end{proof}

Finally, we bound the algorithm's performance %
over the random choice of $\alpha$.

\thmPreemptive*
\begin{proof}
Recall that $\alpha$ is randomly drawn from the interval $(0,1)$ with the density function $f(\theta)=2\theta$.
For any set $S$, we have
\[
  \EX_\alpha[C_S] \leq \int_0^1 (1+\varepsilon')\frac{C_S^\alpha}{\alpha} 2\alpha \; \mathrm{d}\alpha = 2(1+\varepsilon')\int_0^1 C_S^\alpha \; \mathrm{d}\alpha \ .
\]
Note that $C_S^\alpha$ is a non-decreasing, piecewise constant function of $\alpha$. Thus,
\[
        \int_0^1 C_S^\alpha \; \mathrm{d}\alpha = \sum_{1  \leq i \leq L}|\{\alpha: C_S^\alpha = \gamma_{i-1}\}| = \sum_{1  \leq i \leq L} x_{S,i} \cdot \gamma_{i-1} = \cp_S \ .
\]
The last equality follows from~\eqref{IntervalLP:group-finishes} and~\eqref{IntervalLP:CS}.
It follows that $\EX_\alpha[C_S] \leq 2(1+\varepsilon') \cp_S$, and we conclude with
\[
\EX_\alpha \bigg[\sum_{S\in \ms}w_S C_S \bigg] \leq  \sum_{S\in \ms}w_S\EX_\alpha [C_S] \leq 2(1+\varepsilon')\sum_{S\in \ms}w_S\cp_S \leq (2+\delta)(1+\varepsilon') \opt \ .
\]
This completes the proof of the theorem.
\end{proof}

Our algorithm is almost best-possible, since PSP-G contains \abc{1}{\pmtn}{\sum_{S}w_S C_S}, which has been shown to be equivalent to \abc{1}{\mathrm{prec}}{\sum_j w_j C_j}~\cite{DBLP:journals/jal/QueyranneS02} and has a lower bound of $2-\varepsilon$ under a stronger version of the unique game conjecture~\cite{DBLP:conf/focs/BansalK09}.

\section{Conclusion}
With polytope scheduling for group completion times, PSP-G and DPSP-G, we propose a unifying abstraction of several scheduling, coloring, and other graph problems. We present general algorithmic tools that achieve new and improved approximation factors.
While we give a meta-framework for solving the new Discrete PSP, particularly for the general DPSP-G (\Cref{sec: DPSP}), its implementation relies on problem-specific makespan minimization subroutines. It remains as an open question whether the dependence on problem-specific properties can be avoided. Further, for offline PSP-G, it is open whether the $(2+\varepsilon)$-approximation (\cref{sec:preemptive}) can be improved, even in the special case without groups.

\printbibliography

\end{document}